\newfont{\bb}{msbm10 scaled 1100}
\newtheorem{proposition}{Proposition}
\newtheorem{theorem}{Theorem}
\newtheorem{lemma}{Lemma}
\theoremstyle{remark}
\newtheorem{remark}{Remark}
\newtheorem*{remark*}{Remark}
\begin{document}
\title{\huge{Convex Separable Problems with Linear and Box Constraints in Signal Processing and Communications}}
\author{Antonio A.~D'Amico, Luca~Sanguinetti \emph{Member, IEEE}, Daniel P. Palomar \emph{Fellow, IEEE}.
\thanks{\newline \indent A. A. D'Amico and L. Sanguinetti are with the
University of Pisa, Department of Information Engineering, Via
Caruso 56126 Pisa, Italy (e-mail: \{a.damico, luca.sanguinetti\}@iet.unipi.it).
\newline \indent L. Sanguinetti is also with the Alcatel-Lucent Chair on Flexible Radio, Sup\'elec, Gif-sur-Yvette, France (e-mail: luca.sanguinetti@supelec.fr.)
\newline \indent D. P. Palomar is with the Hong Kong University of Science and Technology, Department of Electronic and Computer Engineering, Clear Water Bay, Kowloon, Hong Kong (e-mail: palomar@ust.hk).
\newline \indent Part of the material of this paper was presented at the 39th IEEE International Conference on Acoustics, Speech, and Signal Processing (ICASSP), Florence, Italy, May 4 -- 9, 2014.%
}
\vspace{-0.6cm}}
\maketitle

\begin{abstract} In this work, we focus on separable convex optimization problems with box constraints and a set of triangular linear constraints. The solution is given in closed-form as a function of some Lagrange multipliers that can be computed through an iterative procedure in a finite number of steps. Graphical interpretations are given casting valuable insights into the proposed algorithm and allowing to retain some of the intuition spelled out by the water-filling policy. It turns out that it is not only general enough to compute the solution to different instances of the problem at hand but also remarkably simple in the way it operates. We also show how some power allocation problems in signal processing and communications can be solved with the proposed algorithm.
\end{abstract}
\begin{keywords}
Convex problems, separable functions, linear constraints, box constraints, power allocation, water-filling, cave-filling, multi-level water-filling, multi-level cave-filling.
\end{keywords}

\section{Introduction}

\PARstart{C}{onsider} the following problem:

\begin{align}\label{1}
  (\mathcal P): \quad \underset{\{x_n\}}{\min} \quad &
\sum\limits_{n=1}^N f_n(x_n)\\\nonumber
\text{subject to} \quad & \sum \limits_{n=1}^{j} x_n\le \rho_j \quad j=1,2,\ldots,N\\\nonumber
 & l_n \le x_n \le u_n \quad n=1,2,\ldots,N
\end{align}
where $\{x_n\}$ are the optimization variables, the coefficients $\{\rho_j\}$ are real-valued parameters, and the constraints $ l_n \le x_n \le u_n$ are called variable bounds or box constraints with $-\infty \le l_n < u_n \le +\infty$. The functions $f_{n}$ are real-valued, continuous and strictly convex in $[l_n,u_{n}]$, and continuously differentiable in $(l_n,u_{n})$. If $f_{n}$ is not defined in $l_n$ and/or in $u_n$, then it is extended by continuity as $f_n(l_n)=\mathop {\lim }\nolimits_{x_n \rightarrow l_n^+} f_n(x_n)$ and $f_n(u_n)=\mathop {\lim }\nolimits_{x_n \rightarrow u_n^-} f_n(x_n)$. {Possible extensions of $(\mathcal P)$ will be discussed in Section II.C.} 


\subsection{Motivation and contributions}

Constrained optimization problems of the form \eqref{1} arise in connection with a wide range of power allocation problems in different applications and settings in signal processing and communications. For example, they arise in connection with the design of multiple-input multiple-output (MIMO) systems dealing with the minimization of the power consumption while meeting the quality-of-service (QoS) requirements over each data stream (see for example \cite{PalomarQoS2004,PalomarAug2005, Palomar2005aa, Jiang2006,Bergman2009} for point-to-point communications and \cite{Fu2011,Shaolei2010, Sanguinetti2012,An2013, Sanguinetti2013} for amplify-and-forward relay networks). A survey of some of these problems for point-to-point MIMO communications can be found in \cite{Palomar2005}. It also appears in the design of optimal training sequences for channel estimation in multi-hop transmissions using decode-and-forward protocols \cite{Gao2008} and in the optimal power allocation for the maximization of the instantaneous received signal-to-noise ratio in amplify-and-forward multi-hop transmissions under short-term power constraints \cite{Farhadi09}.
Other instances of \eqref{1} are shown to be the rate-constrained power minimization problem over a code division multiple-access channel with correlated noise \cite{Padakandla2009} and the power allocation problem in amplify-and-forward relaying scheme for multiuser cooperative networks under frequency-selective block-fading \cite{Pham2010}. {Formulations as in \eqref{1} arise also in wireless communications with energy harvesting constraints. For example, they appear in \cite{Ozel2011} wherein the authors look for the optimal energy management scheme that maximizes the throughput in a point-to-point link with an energy harvesting transmitter operating over a fading channel. They can also be found in the design of the precoding strategy that maximizes the mutual information along independent channel accesses under non-causal knowledge of the channel state and harvested energy \cite{Gregori2013}.}

{Clearly, the optimization problem in \eqref{1} can always be solved using standard convex solvers. Although possible, this in general does not provide any insights into its solution and does not exploit the particular structure of the problem itself. In this respect, all the aforementioned works go a step further and provide ad-hoc algorithms for specific instances of \eqref{1} in the attempt of giving some intuition on the solutions. However, this is achieved at the price of a loss of generality in the sense that most of them can only be used for the specific problem at hand.} On the contrary, the main contribution of this work is to develop a general framework that allows one to compute the solution (and its structure) for any problem in the form of \eqref{1}. In other words, whenever a problem can be put in the form of \eqref{1}, then its solution can be efficiently obtained by particularizing the proposed algorithm to the problem at hand without the need of developing specific solutions. 

\vspace{-0.2cm}
\subsection{Related work}
The main related literature to this paper is represented by \cite{Padakandla2007} and \cite{Wang2012} in which the authors focus on solving problems of the form:
\begin{align}\label{1.10}
 \underset{\{x_n\}}{\min} \quad &
\sum\limits_{n=1}^N f_n(x_n)\\\nonumber
\text{subject to} \quad & \sum \limits_{n=1}^{j} x_n\le \sum\limits_{n=1}^j\alpha_n \quad j=1,2,\ldots,N\\\nonumber
 & 0 \le x_n \le u_n \quad n=1,2,\ldots,N
\end{align}
with $\alpha_n \ge 0$ for any $n$. The above problems are known as separable convex optimization problems with linear \emph{ascending inequality} constraints and box constraints. In particular, in \cite{Padakandla2007} the authors propose a dual method to numerically evaluate the solution of the above problem in no more than $N-1$ iterations under an ordering condition on the slopes of the functions at the origin. An alternative solution improving the worst case complexity of \cite{Padakandla2007} is illustrated in \cite{Wang2012}. {Differently from \cite{Padakandla2007} and \cite{Wang2012}, we consider more general problems in which the inequality constraints are \emph{not necessarily} in ascending order since the box constraint values $l_n$ and $u_n$ may possibly be equal to $-\infty$ and $+\infty$, respectively. All this makes \eqref{1} more general than problems of the form given in \eqref{1.10}. Observe, however, that if the lower bounds $l_n$ are all finite, then problem \eqref{1} boils down to \eqref{1.10} (as it can be easily shown using simple mathematical arguments). Compared to \cite{Padakandla2007} and \cite{Wang2012}, however, we also follow a different approach that allows us (simply exploiting the inherent structure of \eqref{1}) to focus only on functions $f_n$ that are continuous, strictly convex and monotonically decreasing in the intervals $[l_n, u_n]$. Furthermore, differently from \cite{Padakandla2007} we do not impose any constraints on the slopes of $f_n$.}

It is also worth mentioning that at the time of submission we became aware (through a private correspondence with the authors) of \cite{NCC2014} in which the problem originally solved in \cite{Padakandla2007} has been revisited in light of the theory of polymatroids. In particular, in \cite{NCC2014} the authors have removed some of the restrictions on functions $f_n$ that were present in \cite{Padakandla2007}. This allows them to come up with a solution similar to the one we propose in this work.

\subsection{Organization}
The remainder of the paper is structured as follows. Some preliminary results are discussed in the next section {together with some possible extensions of the problem at hand}. Section \ref{main_result} provides the main result of the paper: an algorithm to evaluate the solution to ($\mathcal{P}$). Section \ref{graphical_interpretation} presents some graphical interpretations of the way the proposed solution operates. This leads to an interesting water-filling inspired policy. {Section \ref{examples} shows how some power allocation problems of practical interest in signal processing and communications can be solved with the proposed algorithm.} Finally, some conclusions are drawn in Section \ref{conclusions}.


\section{Preliminary results and discussions}
Some preliminary results are discussed in the sequel. In particular, we first study the feasibility (admissibility) of \eqref{1} and then we show that the optimization in \eqref{1} reduces to solve an equivalent problem in which all the functions $f_n$ are continuous, strictly convex and monotonically decreasing in the intervals $[l_n,u_n]$. {In addition, we also discuss some possible extensions of \eqref{1}.}
\vspace{-0.3cm}
\subsection{Feasibility }
The feasibility of \eqref{1} simply amounts to verifying that for given values of $\{l_n\}$, $\{u_n\}$ and $\{\rho_n\}$, the \textit{feasible set} (or \textit{constraint set}) is not empty \cite{BoydBook}.
A necessary and sufficient condition for \eqref{1} to be feasible is provided in the following proposition.

\begin{proposition}
The solution to \eqref{1} exists if and only if
\begin{align}\label{feasibility}
\sum \limits_{n=1}^{j} l_n\le \rho_j \quad j=1,2,\ldots,N.
\end{align}
\end{proposition}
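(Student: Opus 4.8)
The plan is to read ``the solution exists'' as ``the feasible set is nonempty,'' since under the stated continuity of the $f_n$ and their continuous extension to the (possibly infinite) endpoints of $[l_n,u_n]$, the objective attains its minimum over any nonempty feasible set by a Weierstrass-type argument; strict convexity then makes the minimizer unique, though uniqueness is not what is at stake here. Thus the entire content of the proposition is the characterization of nonemptiness of the feasible set by \eqref{feasibility}, which I would prove by establishing the two implications separately.

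For necessity, suppose a feasible point $\{x_n\}$ exists. Then $x_n \ge l_n$ for every $n$, and summing the first $j$ of these bounds together with the triangular constraint $\sum_{n=1}^{j} x_n \le \rho_j$ gives
\begin{align}\label{nec_chain}
\sum_{n=1}^{j} l_n \le \sum_{n=1}^{j} x_n \le \rho_j, \quad j=1,2,\ldots,N,
\end{align}
which is exactly \eqref{feasibility}. This direction is immediate.

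For sufficiency, assume \eqref{feasibility} holds and exhibit an explicit feasible point. The natural candidate is $x_n = l_n$: when all lower bounds are finite, the box constraints $l_n \le x_n \le u_n$ hold because $l_n < u_n$, and $\sum_{n=1}^{j} x_n = \sum_{n=1}^{j} l_n \le \rho_j$ is precisely \eqref{feasibility}, so the point is feasible. The only case requiring care is when some $l_n = -\infty$, for which $x_n = l_n$ is not an admissible real value. I would handle this by replacing each such $x_n$ with a sufficiently negative finite value $-M$: the box constraint $-M \le u_n$ holds for $M$ large (and $-M \ge l_n = -\infty$ trivially), while any partial sum $\sum_{n=1}^{j} x_n$ containing at least one such index can be driven below $\rho_j$ by increasing $M$; the partial sums involving only finite lower bounds are already controlled by \eqref{feasibility}. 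Taking $M$ to be the maximum over the finitely many constraints then yields a single feasible point.

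The main obstacle is precisely this bookkeeping with infinite lower bounds: one must verify that the finitely many upper-sum constraints and box constraints can be met simultaneously by one choice of $M$, and observe that whenever some $l_n = -\infty$ with $n \le j$ the corresponding inequality in \eqref{feasibility} reads $-\infty \le \rho_j$ and is vacuous, so it imposes no obstruction on the construction. Everything else is routine.
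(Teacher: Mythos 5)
Your proof takes essentially the same route as the paper, whose entire argument is the single observation that the point $(l_1,l_2,\ldots,l_N)$ is feasible; your necessity direction via $\sum_{n=1}^{j} l_n \le \sum_{n=1}^{j} x_n \le \rho_j$ is just the converse the paper leaves implicit. You are in fact more careful than the paper on one point: when some $l_n=-\infty$ the paper's candidate point is not a real vector, and your replacement of those coordinates by $-M$ with $M$ large (together with the observation that the corresponding inequalities in \eqref{feasibility} are vacuous) genuinely repairs this. One caution: your opening appeal to a Weierstrass-type argument to equate ``the solution exists'' with nonemptiness of the feasible set is not airtight when box bounds are infinite, since the feasible set is then unbounded and attainment can fail --- for instance, with $N=2$, $l_1=-\infty$, $u_2=+\infty$, and strictly convex decreasing $f_1,f_2$ whose derivatives increase to $-1$ and $-2$ respectively, the objective is unbounded below along $x_1\to-\infty$, $x_2=\rho_2-x_1$ --- but the paper itself silently identifies existence with feasibility, so this is a looseness in the statement rather than a gap in your feasibility argument.
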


\begin{proof}
The proof easily follows from $(\mathcal{P})$ {since the point $(l_1,l_2,\ldots,l_N)$ is feasible.}
\end{proof}

In all subsequent discussions, we assume that \eqref{feasibility} is satisfied and denote $x^{\star}_n$, for $n=1,2,\ldots,N$, the solutions of \eqref{1}.

\vspace{-0.2cm}
\subsection{Monotonic properties of $f_n$}
Observe that since $f_n$ is by definition strictly convex in $[l_n,u_{n}]$ and continuously differentiable in $(l_n,u_{n})$, then the three following cases may occur.

  \vspace{0.1cm}
{\bf{a}}) The function $f_n$ is monotonically increasing in $[l_n,u_{n}]$ or, equivalently, $f_n'(x_n)>0$ for any $x_n \in (l_n,u_{n})$.

  \vspace{0.1cm}
{\bf{b}}) There exists a point $z_n$ in $(l_n,u_n)$ such that $f_n^{'} (z_n) = 0 $ with $f_n'(x_n)<0$ and $f_n'(x_n)>0$ for any $x_n$ in $(l_n, z_n)$ and $(z_n, u_n)$, respectively.

  \vspace{0.1cm}
{\bf{c}}) The function $f_n$ is monotonically decreasing in $[l_n,u_{n}]$ or, equivalently, $f_n'(x_n)<0$ for any $x_n \in (l_n,u_{n})$.

  \begin{lemma} \label{LemmaA}
  If $f_n$ is monotonically increasing in $[l_n,u_{n}]$ and $l_n \ne - \infty$, then $x_n^\star$ is given by
  \begin{align}\label{solA}
   x_n^\star = l_n.
  \end{align}
  \end{lemma}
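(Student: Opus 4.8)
The plan is to establish the claim by a feasibility-preserving perturbation argument that strictly lowers the objective whenever $x_n^\star > l_n$, thereby contradicting optimality. Since each $f_n$ is strictly convex, the objective of $(\mathcal{P})$ is strictly convex and its minimizer is unique, so it suffices to show that no optimal point can have $x_n^\star > l_n$. First I would suppose, toward a contradiction, that an optimal solution $\{x_m^\star\}$ satisfies $x_n^\star > l_n$ for the index $n$ in question (recall that $l_n \ne -\infty$, so $l_n$ is a genuine real number). I then build a candidate point $\{\tilde{x}_m\}$ by setting $\tilde{x}_n = l_n$ and leaving every other coordinate unchanged, i.e. $\tilde{x}_m = x_m^\star$ for $m \ne n$.

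The second step is to check that $\{\tilde{x}_m\}$ is still feasible. The box constraint for index $n$ reads $l_n \le \tilde{x}_n \le u_n$ and holds since $\tilde{x}_n = l_n$ and $l_n < u_n$; the box constraints for $m \ne n$ are untouched. For the triangular linear constraints, lowering $x_n$ can only decrease the leading partial sums: for $j \ge n$ one has $\sum_{m=1}^{j} \tilde{x}_m = \sum_{m=1}^{j} x_m^\star - (x_n^\star - l_n) \le \sum_{m=1}^{j} x_m^\star \le \rho_j$, while for $j < n$ the sum does not involve $x_n$ and is unchanged. Hence all constraints of $(\mathcal{P})$ remain satisfied.

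Finally I would compare objective values. Only the $n$-th summand changes, so the objective gap is exactly $f_n(x_n^\star) - f_n(l_n)$. Because $f_n$ falls under case (a), i.e. $f_n'(x_n) > 0$ on $(l_n,u_n)$, and is continuous at $l_n$, it is in fact strictly increasing on $[l_n,u_n]$; therefore $x_n^\star > l_n$ forces $f_n(l_n) < f_n(x_n^\star)$. Thus $\{\tilde{x}_m\}$ is feasible with a strictly smaller objective, contradicting the optimality of $\{x_m^\star\}$, and we conclude $x_n^\star = l_n$.

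I do not anticipate a serious obstacle: the essential observation is that the linear constraints are \emph{upper} bounds on the leading partial sums, so reducing $x_n$ simultaneously relaxes those constraints and lowers the increasing cost $f_n$, leaving no tension to resolve. The only point requiring a little care is upgrading ``monotonically increasing'' (stated via $f_n' > 0$ on the open interval) to strict monotonicity on the closed interval $[l_n,u_n]$, which is what makes the decrease in the objective strict rather than merely non-increasing and so closes the contradiction.
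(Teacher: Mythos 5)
Your proof is correct and rests on exactly the same key observation as the paper's: lowering $x_n$ to $l_n$ relaxes every triangular constraint $\sum_{m=1}^{j} x_m \le \rho_j$ with $j \ge n$ (and leaves those with $j<n$ untouched), while strictly decreasing the strictly increasing term $f_n$, the strictness following from $f_n'>0$ on $(l_n,u_n)$ plus continuity at the endpoints, just as you note. The paper packages this as a value-function and set-inclusion argument ($\mathcal{S}(x_1)\subseteq\mathcal{S}(l_1)$, hence $f_1(l_1)+F(l_1)<f_1(x_1)+F(x_1)$ for $x_1\in(l_1,u_1]$), whereas you perturb a single coordinate at a putative optimum and keep the others fixed --- a cosmetic repackaging of the same argument, not a genuinely different route.
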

  \vspace{-0.15cm}
\begin{proof} The proof is given in Appendix A. \end{proof}

The above result can be used to find an equivalent form of \eqref{1}. Denote by $\mathcal A \subseteqq \mathcal \{1,2,\ldots,N\}$ the set of indices $n$ in \eqref{1} for which case {\bf{a}}) holds true and assume (without loss of generality)
that $\mathcal A=\{1,2,\ldots,|\mathcal A|\}$.
Using the results of Lemma 1, it follows that $x_n^\star = l_n$ for any $n \in \mathcal A$ while the computation of the remaining variables with indices $n \notin \mathcal A$ requires to solve the following reduced problem:
  \begin{align}\label{caseA}
   \underset{\{x_n\}}{\min} \quad &
                   \sum\limits_{n=|\mathcal A|+1}^{N} f_n(x_n)\\\nonumber
   \text{subject to} \quad & \sum \limits_{n=|\mathcal A|+1}^{j} x_n\le \rho_j^{\prime} \quad j=|\mathcal A|+1,\ldots,N\\\nonumber
   & l_n \le x_n \le u_n \quad n=|\mathcal A|+1,\ldots,N
  \end{align}
  with
    \begin{align}
\rho_j^{\prime} = \rho_j-\sum\limits_{n=1}^{|\mathcal A|}l_n
  \end{align}
for $ j = |\mathcal A|+1,\ldots,N$\footnote{Notice that in order for problem in \eqref{caseA} and thus for the original problem in \eqref{1} to be well-defined it must be $l_n \ne - \infty$ $\forall n \in \mathcal A$.}. The above optimization problem is exactly in the same form of \eqref{1} except for the fact that all its functions $f_n$ fall into cases {\bf{b}}) or {\bf{c}}). To proceed further, we make use of the following result.

  \begin{lemma} \label{LemmaB}
  If there exists a point $z_n$ in $(l_n,u_n)$ such that $f_n^{'} (z_n) = 0 $ with $f_n'(x_n)<0\; \forall x_n \in (l_n, z_n)$ and $f_n'(x_n)>0\; \forall x_n \in (z_n, u_n)$, then it is always
  \begin{align}\label{solB}
   l_n \le x_n^\star \le z_n.
  \end{align}
  \end{lemma}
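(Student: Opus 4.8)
The plan is to prove the upper bound $x_n^\star \le z_n$ by a perturbation (exchange) argument, observing first that the lower bound $l_n \le x_n^\star$ is automatic from the box constraint and requires no work. The key structural facts I would exploit are that $f_n$ is \emph{strictly increasing} on $[z_n,u_n]$ (since $f_n'(x_n)>0$ there) and that all the linear constraints are one-sided cumulative \emph{upper} bounds of the form $\sum_{k=1}^{j}x_k\le\rho_j$. Together these suggest that pushing the $n$-th coordinate downward toward $z_n$ can only decrease the objective while never endangering feasibility.

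Concretely, I would argue by contradiction. Suppose the optimal solution satisfies $x_n^\star > z_n$ for some index $n$. I then construct a competing point $\tilde{x}$ that agrees with $x^\star$ in every coordinate except the $n$-th, where I set $\tilde{x}_n=z_n$. Feasibility of $\tilde{x}$ is checked in two parts. The box constraint holds trivially because $z_n\in(l_n,u_n)\subset[l_n,u_n]$. For the linear constraints, those indexed by $j<n$ do not involve $x_n$ and are therefore untouched, while those indexed by $j\ge n$ have their left-hand sides reduced by the positive amount $x_n^\star-z_n$ and hence remain satisfied. Thus $\tilde{x}$ is feasible.

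Finally I would compare objective values. Because $f_n$ is strictly increasing on $[z_n,u_n]$ and $z_n<x_n^\star$, we have $f_n(z_n)<f_n(x_n^\star)$; since every other term of the separable objective is unchanged, this gives $\sum_k f_k(\tilde{x}_k)<\sum_k f_k(x_k^\star)$, contradicting the optimality of $x^\star$. Therefore $x_n^\star\le z_n$, which together with the box constraint yields $l_n\le x_n^\star\le z_n$. The argument is largely routine, and there is no genuinely hard step; the only point requiring attention is the feasibility check for the cumulative constraints, which is precisely where the ascending (one-sided upper-bound) structure of the linear constraints is essential — an analogous downward move would \emph{not} be automatically admissible if the constraints were two-sided.
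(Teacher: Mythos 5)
Your proof is correct and rests on the same idea as the paper's: the paper disposes of case \textbf{b}) by invoking Lemma~\ref{LemmaA} on the subinterval $[z_n,u_n]$ where $f_n$ is increasing, and the proof of that lemma is exactly your mechanism --- lowering the $n$-th coordinate shrinks all cumulative sums $\sum_{k=1}^{j}x_k$ for $j\ge n$ (in the paper's notation, $\mathcal{S}(x_1)\subseteq\mathcal{S}(l_1)$), so feasibility is preserved while strict monotonicity of $f_n$ above $z_n$ strictly lowers the objective. Your inlined one-coordinate exchange argument is, if anything, slightly more self-contained than the paper's terse appeal to Lemma~\ref{LemmaA}, but it is not a different route.
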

  
\begin{proof} 
The proof is given in Appendix A. 
\end{proof}

Using the above result, it follows that solving \eqref{caseA} amounts to looking for the solution of the following equivalent problem:
{\begin{align}\label{caseB}
 \underset{\{x_n\}}{\min} \quad &
                   \sum\limits_{n=|\mathcal A|+1}^{N} f_n(x_n)\\\nonumber
   \text{subject to} \quad & \sum \limits_{n=|\mathcal A|+1}^{j} x_n\le \rho_j^{\prime} \quad j=|\mathcal A|+1,\ldots,N\\\nonumber
   & l_n \le x_n \le u^\prime_n \quad n=|\mathcal A|+1,\ldots,N
\end{align}}
where
\begin{align}\label{}
 u^\prime_n &= z_n \quad  \text{if} \; n\; \in \; \mathcal{B} \\
 u^\prime_n & = u_n \quad \text{otherwise}
  \end{align}
with $\mathcal B$ being the set of indices $n$ in \eqref{caseA} for which case {\bf{b}}) holds true. The above problem is in the same form as \eqref{1} with the only difference that all functions $f_n$ are monotonically decreasing in $(l_n,u'_{n})$ and thus fall into case {\bf{c}}).

The results of Lemmas \ref{LemmaA} and \ref{LemmaB} can be summarized as follows. Once
the optimal values of the variables associated with functions $f_n$ that are monotonically increasing have been trivially computed through \eqref{solA}, it remains to solve the optimization problem \eqref{caseA} in which the functions $f_n$ belong to either case {\bf{b}}) or {\bf{c}}). In turn, problem \eqref{caseA} is equivalent to problem \eqref{caseB} with only class {\bf{c}}) functions. This means that we can {simply} consider optimization problems of the form in \eqref{1} in which all functions $f_n$ fall into case {\bf{c}}). Accordingly, in the following we assume that \eqref{feasibility} is satisfied and only focus on functions $f_n$ that are continuous, strictly convex and monotonically decreasing in the intervals $[l_n,u_n]$. {For notational simplicity, however, in all subsequent derivations we maintain the notation given in \eqref{1}, though we assume that the results of Lemmas 1 and 2 have been already applied.}

\vspace{-0.4cm}
\subsection{Possible extensions}

{An equivalent form of $(\mathcal P)$, which is sometimes encountered in literature, is given by:
\begin{align}\label{P_2}
 \underset{\{x_n\}}{\min} \quad &
\sum\limits_{n=1}^N f_n(x_n)\\\nonumber
\text{subject to} \quad & \sum \limits_{n=1}^{j} x_n\ge \rho_j \quad j=1,2,\ldots,N\\\nonumber
 & l_n \le x_n \le u_n \quad n=1,2,\ldots,N. \nonumber
\end{align}
{The above problem can be rewritten in the same form as in \eqref{1} simply replacing $x_n$ with $y_n = - x_n$ in \eqref{P_2}. In doing this, we obtain
\begin{align} \label{P_3}
 \underset{\{y_n\}}{\min} \quad &
\sum\limits_{n=1}^N f_n(- y_n)\\ \nonumber
\text{subject to} \quad & \sum \limits_{n=1}^{j} y_n\le - \rho_j \quad j=1,2,\ldots,N\\ \nonumber
 & -u_n \le y_n \le -l_n\quad n=1,2,\ldots,N
\end{align}
which is exactly in the same form of $(\mathcal P)$.}} 

{Consider also the following problem
\begin{align}\label{P_3}
 \underset{\{x_n\}}{\min} \quad &
\sum\limits_{n=1}^N f_n(x_n)\\\nonumber
\text{subject to} \quad & \sum \limits_{n=1}^{j} g_n(x_n)\le \rho_j \quad j=1,2,\ldots,N\\\nonumber
 & l_n \le x_n \le u_n \quad n=1,2,\ldots,N \nonumber
\end{align}
in which $g_n$ is a {continuos and} strictly increasing function. Setting $y_n=g_n(x_n)$ yields
\begin{align}\label{P_3.1}
 \underset{\{x_n\}}{\min} \quad &
\sum\limits_{n=1}^N p_n(y_n)\\\nonumber
\text{subject to} \quad & \sum \limits_{n=1}^{j} y_n\le \rho_j \quad j=1,2,\ldots,N\\\nonumber
 & l_n^\prime \le {y_n} \le u_n^\prime \quad n=1,2,\ldots,N. \nonumber
\end{align}
where $p_n = f_n \circ g_n^{-1}$, $l_n^\prime = g_n(l_n)$ and $u_n^\prime = g_n(u_n)$ with $l_n^\prime < u_n^\prime$ since $g_n$ is strictly increasing. Clearly, \eqref{P_3.1} is in the same form of $(\mathcal P)$ in \eqref{1} provided that $p_n$
is continuous and strictly convex in $[l_n^\prime ,u^\prime_n]$, and continuously differentiable in ($l_n^\prime ,u^\prime_n$). This happens for example when: $i$) $f_n$ is a strictly convex decreasing function and
$g_n^{-1}$ is a concave function (or, equivalently, $g_n$ is a convex function); $ii$) $f_n$ is a strictly convex increasing function and $g_n^{-1}$ is a convex function (or, equivalently, $g_n$ is a concave function).}

{Similar arguments can be used when $g_n$ in \eqref{P_3} is a strictly decreasing function. This means that the results of this work can also be applied to the case in which the constraints have the following form:
\begin{align}\label{constraint_g_n}
\sum\limits_{n=1}^j g_n(x_n) \le \rho_j
\end{align}
with $g_n$ being continuously differentiable and invertible in $[l_n ,u_n]$.}

\vspace{-0.2cm}
\section{The main result}\label{main_result}

This section proposes an iterative algorithm that computes the solutions $x_n^\star$ for $n=1,2,\ldots,N$ in a finite number of steps $L<N$.
We begin by {denoting}
\begin{align}\label{h_n}
h_n(x_n) = -f_n^{'}(x_n)
\end{align}
which is a positive and strictly decreasing function since $f_n$ is by definition monotonically decreasing, strictly convex in $[l_n,u_{n}]$ and continuously differentiable in $(l_n,u_{n})$. {We take $h_n(l_n)=\mathop {\lim }\nolimits_{x_n \rightarrow l_n^+} h_n(x_n)$ and $h_n(u_n)=\mathop {\lim }\nolimits_{x_n \rightarrow u_n^-} h_n(x_n)$.} We also define the functions $\xi_n(\varsigma)$ for $n=1,2,\ldots,N$ as follows
\begin{align}\label{xi_n}
\xi_n(\varsigma) = \left\{ {\begin{array}{*{20}c}
  {u_n } & {0 \le \varsigma < h_n (u_n )} \\ \\
  {h_n^{-1}(\varsigma)} & { h_n (u_n ) \le \varsigma < h_n(l_n )}\\ \\
   {l_n} & {h_n(l_n ) \le \varsigma} \\
\end{array}} \right.
\end{align}
where $0 \le \varsigma < +\infty$ and ${h_n^{-1}} $ denotes the inverse function of ${h_n}$ within the interval $[l_n,u_n]$. Since $h_n$ is a continuous and strictly decreasing function, then ${h_n^{-1}} $ is continuous and strictly decreasing {whereas $\xi_n$ is continuous and non-increasing}. Functions $\xi_n(\varsigma) $ in \eqref{xi_n} can be easily rewritten
in the following compact form:
\begin{align}\label{xi_n.1}
\xi_n(\varsigma) = \min\left\{\max\left\{h_n^{-1}(\varsigma), l_n\right\}, u_n\right\}
\end{align}
from which it is seen that each $\xi_n(\varsigma)$ projects $h_n^{-1}(\varsigma)$ onto the interval $[l_n,u_n]$.

\begin{theorem}
The solutions of $(\mathcal P)$ are given by
\begin{align}\label{x_n^star}
x_n^\star= \xi_n(\sigma_n^\star) 
\end{align}
where the quantities {$\sigma_n^\star \ge 0$} for $n=1,2,\ldots,N$ are some Lagrange multipliers satisfying the following conditions{\footnote{We use $0\le x\, \bot \, y \le0$ to denote $0\le x$, $y \le 0$ and $xy = 0$.}}:
{\begin{equation}\label{KKT2eq_1}
 0 \le (\sigma_n^\star-\sigma_{n+1}^\star) \; \bot\; \Big(\sum\limits_{j=1}^{n}x_j^\star-\rho_n\Big) \le 0 \quad
 \end{equation}}
with $\sigma_{N+1}^\star = 0$.
\end{theorem}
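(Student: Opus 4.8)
The plan is to treat $(\mathcal P)$ directly as a convex program and apply the Karush--Kuhn--Tucker (KKT) machinery. After the reductions of Lemmas~\ref{LemmaA} and~\ref{LemmaB} every $f_n$ is strictly convex and all constraints are affine, so the problem is convex with a strictly convex objective; hence a minimizer is unique and the KKT conditions are both necessary and sufficient for optimality (affine constraints satisfy the linearity constraint qualification, so no Slater-type argument is needed). First I would attach a multiplier $\lambda_j\ge 0$ to each cumulative constraint $\sum_{n=1}^{j}x_n-\rho_j\le 0$ and multipliers $\mu_n\ge 0$, $\nu_n\ge 0$ to the box constraints $l_n-x_n\le 0$ and $x_n-u_n\le 0$ (dropping whichever bound is infinite), and form the Lagrangian
\begin{align}\label{lagr-sketch}
L=\sum_{n=1}^{N}f_n(x_n)+\sum_{j=1}^{N}\lambda_j\Big(\sum_{m=1}^{j}x_m-\rho_j\Big)+\sum_{n=1}^{N}\mu_n(l_n-x_n)+\sum_{n=1}^{N}\nu_n(x_n-u_n).
\end{align}

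The decisive observation is that $x_n$ enters only the cumulative constraints indexed by $j\ge n$. Differentiating \eqref{lagr-sketch} and recalling $h_n=-f_n'$, stationarity therefore reads $h_n(x_n^\star)=\sum_{j=n}^{N}\lambda_j-\mu_n+\nu_n$. This suggests introducing the partial sums
\begin{align}\label{sigma-sketch}
\sigma_n^\star=\sum_{j=n}^{N}\lambda_j,\qquad \sigma_{N+1}^\star=0,
\end{align}
so that $\sigma_n^\star\ge 0$ (a sum of nonnegative multipliers), $\sigma_n^\star-\sigma_{n+1}^\star=\lambda_n\ge 0$, and stationarity collapses to $h_n(x_n^\star)=\sigma_n^\star-\mu_n+\nu_n$. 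Under this change of variables the complementary slackness relations $\lambda_j\big(\sum_{m=1}^{j}x_m^\star-\rho_j\big)=0$ become exactly the orthogonality conditions $(\sigma_n^\star-\sigma_{n+1}^\star)\big(\sum_{j=1}^{n}x_j^\star-\rho_n\big)=0$ of \eqref{KKT2eq_1}, while primal feasibility supplies $\sum_{j=1}^{n}x_j^\star-\rho_n\le 0$; this establishes the whole orthogonality block of the statement.

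It then remains to show $x_n^\star=\xi_n(\sigma_n^\star)$, which I would obtain by a three-way case analysis driven by the box complementary slackness $\mu_n(l_n-x_n^\star)=0$ and $\nu_n(x_n^\star-u_n)=0$. If $l_n<x_n^\star<u_n$, both $\mu_n$ and $\nu_n$ vanish and stationarity gives $h_n(x_n^\star)=\sigma_n^\star$, i.e. $x_n^\star=h_n^{-1}(\sigma_n^\star)$, which (by monotonicity of $h_n$) places $\sigma_n^\star$ strictly inside $\big(h_n(u_n),h_n(l_n)\big)$ --- the middle branch of \eqref{xi_n}. If $x_n^\star=l_n$, then $\nu_n=0$ and $\mu_n\ge 0$ yield $\sigma_n^\star=h_n(l_n)+\mu_n\ge h_n(l_n)$, matching the bottom branch; symmetrically, if $x_n^\star=u_n$ then $\mu_n=0$ and $\sigma_n^\star=h_n(u_n)-\nu_n\le h_n(u_n)$, matching the top branch. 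Collecting the three cases reproduces the compact projection $x_n^\star=\min\{\max\{h_n^{-1}(\sigma_n^\star),l_n\},u_n\}=\xi_n(\sigma_n^\star)$ of \eqref{xi_n.1}. Finally I would invoke sufficiency: by convexity, any pair $(x_n^\star,\sigma_n^\star)$ satisfying stationarity, the orthogonality conditions and primal feasibility is globally optimal, so the characterization is exact.

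The step I expect to require the most care is not the algebra but the handling of the infinite bounds and the degenerate endpoints of $\xi_n$. When $l_n=-\infty$ or $u_n=+\infty$ the corresponding box constraint (and its multiplier) is simply absent, and one must verify that $h_n(l_n)=+\infty$ or $h_n(u_n)=0$ in the limit so that the offending branch of \eqref{xi_n} never activates and the case analysis still exhausts all values $\sigma_n^\star\in[0,+\infty)$. A second delicate point is confirming that a minimizer actually exists when the feasible set is unbounded (some $u_n=+\infty$): here one argues that the cumulative constraints together with the finite lower bounds keep the relevant coordinates bounded above, so that, combined with continuity of the $f_n$, attainment is guaranteed before the KKT conditions are applied.
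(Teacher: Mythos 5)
Your proposal is correct and follows essentially the same route as the paper's Appendix B: writing the KKT conditions, telescoping the cumulative-constraint multipliers into $\sigma_n^\star=\sum_{j=n}^{N}\lambda_j$ so that complementary slackness becomes \eqref{KKT2eq_1}, and recovering $x_n^\star=\xi_n(\sigma_n^\star)$ via the same three-case analysis on the box constraints, with convexity giving sufficiency. Your closing remarks on constraint qualification, infinite bounds, and attainment go slightly beyond what the paper makes explicit, but they do not change the argument.
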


\begin{proof} The proof is given in Appendix B. \end{proof}

 From \eqref{xi_n.1} and \eqref{x_n^star}, it easily follows that $x_n^\star$ can be compactly represented as
\begin{equation}\label{x_n^star_2}
x_n^\star= \min\left\{\max\left\{h_n^{-1}(\sigma_n^\star), l_n\right\}, u_n\right\}.
\end{equation}


\begin{lemma}
The Lagrange multipliers $\sigma_n^\star$ satisfying \eqref{KKT2eq_1} can be computed by means of the iterative procedure illustrated in {\bf{Algorithm 1}}.\end{lemma}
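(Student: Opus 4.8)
The plan is to verify that the sequence $\{\sigma_n^\star\}$ returned by \textbf{Algorithm 1} satisfies the three requirements bundled in \eqref{KKT2eq_1} — the monotonicity $\sigma_n^\star \ge \sigma_{n+1}^\star$, the primal feasibility $\sum_{j=1}^{n} x_j^\star \le \rho_n$, and the complementary slackness $(\sigma_n^\star-\sigma_{n+1}^\star)\big(\sum_{j=1}^{n} x_j^\star-\rho_n\big)=0$ — together with the finite-termination claim $L<N$. The whole argument rests on one elementary property already recorded above: since each $\xi_n(\varsigma)$ is continuous and non-increasing, the partial sums $S_m(\varsigma) = \sum_{j=1}^{m}\xi_j(\varsigma)$ are continuous and non-increasing in $\varsigma$ as well. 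This monotonicity is what lets the algorithm, for any prospective binding constraint $m$, associate a well-defined level $\varsigma$ solving $S_m(\varsigma)=\rho_m$ (or declare the constraint slack at $\varsigma=0$), and it is the engine driving the entire proof.

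First I would formalize the block structure produced by the iteration as a loop invariant. At the end of iteration $k$ the algorithm has fixed breakpoints $0=n_0<n_1<\cdots<n_k$ on each of whose blocks $(n_{i-1},n_i]$ the multiplier is held at a common value $\sigma^{(i)}$, with $x_n^\star=\xi_n(\sigma^{(i)})$ for $n$ in that block. I would maintain as invariants that (i) the levels are strictly decreasing, $\sigma^{(1)}>\cdots>\sigma^{(k)}\ge 0$; (ii) each right endpoint carries an active constraint, $\sum_{j=1}^{n_i}x_j^\star=\rho_{n_i}$; and (iii) every constraint strictly interior to a determined block is satisfied. The base case $k=0$ is immediate, and each iteration appends exactly one new block by computing, on the current residual segment, the level that activates the right-endpoint constraint and then locating the first interior constraint (if any) whose activation is forced.

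The main obstacle — and the heart of the lemma — is showing that the greedy level assigned to a block never has to be revised, i.e. that the locally computed level is simultaneously compatible with the strict-decrease requirement of invariant (i) and with feasibility of all constraints inside the block. I would handle this with a monotonicity and exchange argument built on $S_m$. If, after setting the right-endpoint constraint active at level $\varsigma$, some interior constraint $m$ were violated, continuity and monotonicity of $S_m(\cdot)$ pin the smallest such index as the genuine next breakpoint and force its activating level to be \emph{strictly larger} than $\varsigma$, so the newly created block sits at a higher level and invariant (i) is preserved; conversely, fixing an earlier block at a higher level only lowers the cumulative sums seen downstream, so no previously fixed (higher) level can subsequently be contradicted by a later (lower) block. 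This one-directional coupling is precisely what rules out backtracking and guarantees that the partition stabilizes.

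With the invariants in hand the conditions in \eqref{KKT2eq_1} fall out directly. Across a breakpoint the level strictly decreases, so $\sigma_n^\star-\sigma_{n+1}^\star>0$ there, and invariant (ii) makes the corresponding constraint active, killing the product; inside a block $\sigma_n^\star-\sigma_{n+1}^\star=0$, so the product again vanishes regardless of the slack, which is nonpositive by invariant (iii). The non-increasing monotonicity of $\{\sigma_n^\star\}$ and the feasibility of all $N$ constraints likewise follow from invariants (i)--(iii), the terminal level being consistent with $\sigma_{N+1}^\star=0$. Finally, because every iteration permanently seals at least one further group of variables at an active constraint, the procedure halts after finitely many steps, consistent with the bound $L<N$ asserted for the algorithm, which completes the proof.
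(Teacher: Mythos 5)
Your overall architecture (block decomposition with levels constant on blocks, strictly decreasing across blocks, active constraints at breakpoints, interior constraints slack-feasible by monotonicity of the cumulative sums) is exactly the skeleton of the paper's Appendix C, but the crux of the lemma is mishandled. Algorithm 1 determines each block end as the index attaining the \emph{maximum} of the levels $\varsigma_n^\star$ over the whole residual range, taking the \emph{largest} such index in case of ties (see \eqref{101.10}--\eqref{102.10}). You replace this with ``activate the right-endpoint constraint, then split at the smallest violated interior constraint,'' and you claim that this smallest violated index is the genuine next breakpoint with a strictly larger level. That is false in general. Take $N=3$ with $\xi_i(\varsigma)=-\varsigma$ for all $i$ (realizable with $f_i(x)=x^2/2$, $u_i=0$, $l_i=-\infty$) and $\rho=(-4,-12,-9)$: then $\varsigma_1^\star=4$, $\varsigma_2^\star=6$, $\varsigma_3^\star=3$. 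Your rule seals the block $\{1\}$ at level $4$, after which the level activating the second constraint in the residual problem solves $-\varsigma=-12-(-4)$, i.e.\ $\varsigma=8>4$ --- your invariant (i) is broken and backtracking \emph{would} be needed. Algorithm 1 instead seals $\{1,2\}$ at $\mu^\star=6$, and the genuine argument runs the other way around: with the max rule, interior feasibility is automatic because $\varsigma_m^\star\le\mu^\star$ and monotonicity give $c_m(\mu^\star)\le c_m(\varsigma_m^\star)=\gamma_m$ (the paper's inequality \eqref{1000}), while the strict decrease $\mu_j^\star>\mu_{j+1}^\star$ follows from the \emph{strict} inequality in \eqref{1000} for $n>k_j^\star$ --- which itself hinges on the largest-index tie-break in \eqref{102.10} --- combined with the additive split $c_{k_{j+1}^\star}(\varsigma;j)=c_{k_j^\star}(\varsigma;j)+c_{k_{j+1}^\star}(\varsigma;j+1)$. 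Your ``one-directional coupling'' sentence gestures at this conclusion but never proves it, and under your selection rule it is simply untrue.

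Two secondary gaps. First, you treat the levels solving $S_m(\varsigma)=\rho_m$ as automatically ``well-defined,'' but $S_m$ is only non-increasing, and in the reduced problems the equation has a solution only if strict feasibility $\rho_n-\rho_{k_j^\star}>\sum_{i>k_j^\star}^{n} l_i$ propagates from one iteration to the next; the paper proves this by a contradiction argument built again on the strict form of \eqref{1000} (its Appendix D), and your proof cannot skip it since the existence of the next breakpoint level is what your induction steps on. Second, your invariant (ii) (``each right endpoint carries an active constraint'') fails for the final block when $\mu_L^\star=0$: there $x_n^\star=u_n$ and the terminal constraint may be strictly slack, which is harmless for \eqref{KKT2eq_1} only because the multiplier gap $\sigma_{k_L^\star}^\star-\sigma_{N+1}^\star=0-0$ vanishes; the invariant must be weakened to cover this case (as the paper does implicitly in Remark 4 and in the $\mu_L^\star=0$ branch). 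Your termination count ($k_j^\star$ strictly increases, hence at most $N-1$ productive iterations) is fine, but as it stands the proposal has a genuine hole at its central step.
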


\begin{proof} The proof is given in Appendix C. \end{proof}

{\begin{algorithm}[t]
\caption{Iterative procedure for solving $(\mathcal P)$ in \eqref{1}.}

\begin{enumerate}
\item Set $j=0$ and $\gamma_n =\rho_n$ for every $n$.
\item {{\bf{While}}} $j < N$
\begin{enumerate}
\item Set $\mathcal{N}_{j}=\{j+1, \ldots, N\}$
\item For every $n \in \mathcal{N}_{j}$.
\begin{enumerate}
\item If $\gamma_n < \sum\nolimits_{i=j+1}^n u_i$ then compute $\varsigma_n^\star$ as the solution of
\begin{align}\label{100.10}
c_n(\varsigma)=\sum\limits_{i=j+1}^n \xi_i(\varsigma) = \gamma_n\end{align}
for $\varsigma$.
\item If $\gamma_n \ge \sum\nolimits_{i=j+1}^n u_i$ then set
\begin{align}\label{100.101}
\varsigma_n^\star=0.
\end{align}
\end{enumerate}
\item Evaluate
\begin{align}\label{101.10}
\mu^\star= \underset{n\in \mathcal{N}_{j}}{\max} \quad \varsigma_n^\star
\end{align}
and
\begin{align}\label{102.10}
k^\star = \underset{n\in \mathcal{N}_{j}}{ \max} \left\{n | \varsigma_n^\star=\mu^\star\right\}.
\end{align}
\item Set $\sigma_n^\star\leftarrow \mu^\star$ for $n=j+1, \ldots,$ $ k^\star$.\\
\item Use $\sigma_n^\star$ in \eqref{x_n^star} to obtain $x_n^\star$ for $n=j+1, \ldots,$ $ k^\star$. \\
\item Set $\gamma_n\leftarrow\gamma_n- \gamma_{k^\star}$
for $n=k^\star+1, \ldots, N$.\\
\item Set $j\leftarrow k^\star$.
\end{enumerate}
\end{enumerate}
\end{algorithm}}

As seen, Algorithm 1 proceeds as follows {(see also Section IV for a more intuitive graphical illustration)}. At the first iteration it sets $j=0$ and $\gamma_n=\rho_n$, $\forall n$, and for those values of $n \in \{1,2,\ldots,N\}$ such that
\begin{align}\label{c_n_1_10}
\gamma_n < \sum\limits_{i=1}^n u_i
\end{align}
it computes the unique solution $\varsigma_n^\star$ (see Appendix D for a detailed proof on the existence and uniqueness of $\varsigma_n^\star$) of the following equation
\begin{align}\label{c_n_1_2}
c_n(\varsigma) = \sum\limits_{i=1}^n \xi_i(\varsigma) = \gamma_n.
\end{align}
On the other hand, for those values of $n\in \{1,2,\ldots,N\}$ such that
\begin{align}\label{c_n_1_11}
\gamma_n \ge \sum\limits_{i=1}^n u_i
\end{align}
it sets $\varsigma_n^\star=0$. The values $\varsigma_n^\star$ computed as described above, for $n=1,2,\ldots,N$, are used in \eqref{101.10} and \eqref{102.10} to obtain $\mu^\star$ and $k^\star$, respectively. {As it follows from \eqref{101.10} and \eqref{102.10}, $\mu^\star$ is set equal to the maximum value of $\{\varsigma_n^\star\}$ with $n\in \{1,2,\ldots,N\}$ while $k^\star$ stands for its corresponding index.} Both are then used to replace $\sigma_n^\star$ with $\mu^\star$ for $n=1,2,\ldots,k^\star$. Note that if two or more indices can be associated with $\mu^\star$ (meaning that $\varsigma_n^\star = \mu^\star$ for all such indices), then according to \eqref{102.10} the maximum one is selected.

Once $\{\sigma_1^\star,\sigma_2^\star,\ldots,\sigma_{k^{\star}}^\star\}$ have been computed, Algorithm 1 moves to the second step, which essentially consists in solving the following reduced problem:
\begin{align}\label{4.1}
   \quad \underset{\{x_n\}}{\min} \quad &
\sum\limits_{n=k^\star +1}^N f_n(x_n)\\ \nonumber
\text{subject to} \quad & \!\!\!\!\!\!\sum \limits_{n=k^\star +1}^{j} x_n\le \gamma_j - \gamma_{k^{\star}}\quad j=k^\star +1,k^\star +2,\ldots,N\\ \nonumber
 & l_n \le x_n \le u_n \quad n=k^\star +1,k^\star+2,\ldots,N
\end{align}
using the same procedure as before. The iterative procedure terminates in a finite number of steps when all quantities $\sigma_n^\star$ are computed. According to Theorem 1, the solutions of $(\mathcal P)$ for $n=1,2,\ldots, N$ are eventually obtained as $x_n^\star= \xi_n(\sigma_n^\star)$.

\vspace{-0.2cm}
\subsection{Remarks}

The following remarks are of interest.

\begin{remark}It is worth observing than in deriving Algorithm 1 we have implicitly assumed that the number of linear constraints in \eqref{1} is exactly $N$. When this does not hold true, Algorithm 1 can be slightly modified in an intuitive and straightforward manner. Specifically, let $\mathcal{L} \subset \{1,2,\ldots,N\}$ denote the subset of indices associated to the linear constraints of the optimization problem at hand. In these circumstances, we have that \eqref{1} reduces to:
  \begin{align}\label{1mod}
   \underset{\{x_n\}}{\min} \quad &
   \sum\limits_{n=1}^N f_n(x_n)\\\nonumber
   \text{subject to} \quad & \sum \limits_{n=1}^{j} x_n\le \rho_j \quad j \in \mathcal{L}\\
   & l_n \le x_n \le u_n \quad n=1,2,\ldots,N. \nonumber
   \end{align}
The solution of \eqref{1mod} can still be computed through the iterative procedure illustrated in Algorithm 1 once the two following changes are made:

  \begin{itemize}
  \item Step a) -- Replace $\mathcal{N}_{j}$ with $\mathcal{N}_{j} \cap \mathcal{L}$.
  \item Step f) -- Replace the statement ``Set $\gamma_n\leftarrow\gamma_n- \gamma_{k^\star}$
   for $n=k^\star+1,k^\star+2, \ldots, N$" with ``Set $\gamma_n\leftarrow\gamma_n- \gamma_{k^\star}$
   for $n \in \{k^\star+1,k^\star+2, \ldots, N\} \cap \mathcal{L}$".
  \end{itemize}
As seen, when only a subset $\mathcal{L}$ of constraints must be satisfied, then Algorithm 1 proceeds computing the quantities $\varsigma_n^\star$ only for the indices $n\in \mathcal{L}$.
\end{remark}

\begin{remark}The number of iterations $L$ required by Algorithm 1 to compute all the Lagrange multipliers $\sigma_n^\star$ (and, hence, to compute all the solutions $x_n^\star$) depends on the cardinality $\left|\mathcal{L}\right|$ of $\mathcal{L}$ (or, equivalently, on the number of linear constraints). {In general, $L$ is less than or equal to $\left|\mathcal{L}\right|$. However,} if $\left|\mathcal{L}\right|=1$ only one iteration is required and thus $L=1$. Also, if there is no linear constraint (which means $\mathcal{L}= \emptyset $ and $\left|\mathcal{L}\right|=0$) the solutions of $(\mathcal{P})$ can be computed without running Algorithm 1 since they are trivially given by $x_n^\star=u_n$. On the other hand, if $\left|\mathcal{L}\right|=N$ the maximum number of iterations required is $N-1$. Indeed, assume that at each iteration Algorithm 1 provides only one $x_n^\star$ (which amounts to saying that at the first iteration Algorithm 1 computes $x_1^\star$, at the second $x_2^\star$, and so forth). Accordingly, at the end of the $(N-1)$th iteration the values of $x_1^\star,x_2^\star,\ldots,x_{N-1}^\star$ are available, and $x_N^\star$ can be directly computed as $ x_N^\star = \min\{(\rho_N-\sum_{n=1}^{N-1}x_n^\star),u_n\}$ {without the need of performing the $N$th iteration. For simplicity, in the sequel we assume that the $N$th iteration is always performed so that it is assured that the last value of $k^\star$ computed through \eqref{102.10} is always equal to $N$.}
\end{remark}
\begin{remark} Observe that if there exists one or more values of $j\in \mathcal{L}$ in \eqref{1mod} for which the following condition holds true 
\begin{align}\label{000012}
\rho_j= \sum\limits_{i=1}^j l_i
\end{align}
then it easily follows that $x^{\star}_{n}=l_n$ for $n=1,2,\ldots, j_{\max}$, with $j_{\max}$ being the maximum value of $j\in \mathcal{L}$ such that the above condition is satisfied. This means that solving \eqref{1mod} basically reduces to find the solution of the following problem:
  \begin{align}\label{caseCAC}
   \underset{\{x_n\}}{\min} \quad &
                   \sum\limits_{n=j_{\max}+1}^{N} f_n(x_n)\\\nonumber
   \text{subject to} \quad & \sum \limits_{n=j_{\max}+1}^{j} x_n\le \rho_j^{\prime} \quad j \in \mathcal L \setminus \mathcal C\\\nonumber
   & l_n \le x_n \le u_n \quad n=j_{\max}+1,\ldots,N
  \end{align}
in which $\mathcal C = \{1,2, \ldots, j_{\max}\}$ and $\rho_j^{\prime} = \rho_j-\sum\nolimits_{n=1}^{j_{\max}}l_n$.
\end{remark}

\begin{remark} {For later convenience, we concentrate on the computation of $\mu^\star$ in the last step of Algorithm 1}. For this purpose, denote by $\{\mu^{\star}_{1},\mu^{\star}_{2},\ldots,\mu^{\star}_{L}\}$ and $ k^{\star}_{1}<k^{\star}_{2}<\cdots<k^{\star}_{L}$ (with $k^{\star}_{1} \ge 1$ and $k^{\star}_{L}=N$) the values of $\mu^{\star}$ and $k^\star$ provided by \eqref{101.10} and \eqref{102.10}, respectively, at the end of the $L$ iterations required to solve $(\mathcal{P})$. Setting $k_{0}^{\star}=0$, we may write
\begin{equation}
\label{rem3_1}
\sigma_{n}^\star=\mu_{j}^{\star} \quad k_{j-1}^{\star}+1 \le n \le k_{j}^{\star} \; \mathrm{and} \; j=1,2,\ldots,L
\end{equation}
with $\{\mu_{1}^{\star},\mu_{2}^{\star},\ldots,\mu_{{L-1}}^{\star}\}$ such that
\begin{align}\label{rem3_2}
\sum\limits_{n= k_{j-1}^{\star}+1}^{k_{j}^{\star}} x_n^\star = \rho_{k_{j}^{\star}}- \rho_{k_{j-1}^{\star}} \quad j=1,2,\ldots,{L-1}.
\end{align}
For $j=L$ two cases may occur, namely $\mu_{L}^{\star} > 0$ or $\mu_{L}^{\star} = 0$. In the former, $\mu_{L}^{\star} $ is such that
\begin{align} \label{rem3_3}
\sum\limits_{n= k_{L-1}^{\star}+1}^{k_{L}^{\star}} x_n^\star = \rho_{k_{L}^{\star}}- \rho_{k_{L-1}^{\star}}
\end{align}
while in the latter we simply have that
\begin{equation}
\label{rem3_4}
x_n^\star=u_n \quad \quad n=k_{L-1}^{\star}+1,\ldots,k_{L}^{\star}.
\end{equation}
\end{remark}
\begin{remark}
At any given iteration, Algorithm 1 requires to solve at most $N-k^\star$ non-linear equations (where $k^\star$ is the value obtained from \eqref{102.10} at the previous iteration):
\begin{align}\label{eq1R1}
\!\!\!\!\!c_n({\varsigma})= \sum\limits_{i=k^\star+1}^n \xi_i(\varsigma) = \gamma_n\quad n=k^\star+1,k^\star+2,\ldots, N.
\end{align}
When the solutions $\{\varsigma_{n}^\star\}$ of the above equations can be computed in closed form, the computational complexity required by each iteration is nearly negligible. On the other hand, when a closed-form does not exist, this may result in excessive computation. In this latter case, a possible means of reducing the computational complexity relies on the fact that $c_n(\varsigma)$ is a non-increasing function as it is the sum of non-increasing functions. Now, assume that the solution of \eqref{eq1R1} has been computed for $n=n'$. Since we are interested in the maximum between the solutions of \eqref{eq1R1}, as indicated in \eqref{101.10}, then for $n''>n'$ $c_{n''}(\varsigma) = \gamma_{n''}$ must be solved only if $c_{n''}(\varsigma_{n'}^\star) > \gamma_{n''}$. Indeed, only in this case $\varsigma_{n''}^\star$ would be greater than $\varsigma_{n'}^\star$. Accordingly, we may proceed as follows. We start by solving \eqref{eq1R1} for $n=k^\star+1$. Then, we look for the first index $n>k^\star+1$ for which $c_{n}(\varsigma_{k^\star+1}^\star) > \gamma_{n}$ and solve the equation associated to such an index. We proceed in this way until $n=N$. In this way, the number of non-linear equations solved at each iteration is smaller than or equal to that required by Algorithm 1.
\end{remark}

\begin{remark}
{From the above remark, it follows that the proposed algorithm can be basically seen as composed of two layers. The outer layer computes the Lagrange multipliers $\{\sigma_n^\star\}$ whereas the inner layer evaluates the solution to \eqref{eq1R1}. If the latter can be solved in closed form, then the complexity required by the inner layer is negligible and thus the number of iterations required to solve the problem is essentially given by the number of iterations of the outer layer, which is at most $N-1$  with $N$ being the number of linear constraints. On the other hand, if the solution to \eqref{eq1R1} cannot be computed in closed form, then the total number of iterations should also take into account the complexity of the inner layer. However, this cannot be easily quantified as it largely depends on the particular structure of \eqref{eq1R1} and the specific iterative procedure used to solve it.} 
\end{remark}

%

\begin{figure}[t]
\begin{center}
\psfrag{x1}[r][m]{\scriptsize{$x_1^\star = -0.8$}}
\psfrag{x2}[r][m]{\scriptsize{$x_2^\star=-1.2$}}
\psfrag{x3}[r][m]{\scriptsize{$x_3^\star=1.9$}}
\psfrag{x4}[r][m]{\scriptsize{$x_4^\star=-1.8$}}
\psfrag{x5}[c][m][1][90]{\scriptsize{\quad\;$\sigma_3^\star=\sigma_4^\star= 1.195$\quad}}
\psfrag{x6}[c][m][1][90]{\scriptsize{$\sigma_1^\star=\sigma_2^\star= 4.451$}}
\psfrag{x7}[r][m]{\scriptsize{$\varsigma$}}
\psfrag{data1}[l][m]{\scriptsize{$\xi_1(\varsigma)$}}
\psfrag{data2}[l][m]{\scriptsize{$\xi_2(\varsigma)$}}
\psfrag{data3}[l][m]{\scriptsize{$\xi_3(\varsigma)$}}
\psfrag{data4}[l][m]{\scriptsize{$\xi_4(\varsigma)$}}
\includegraphics[width=0.9\columnwidth]{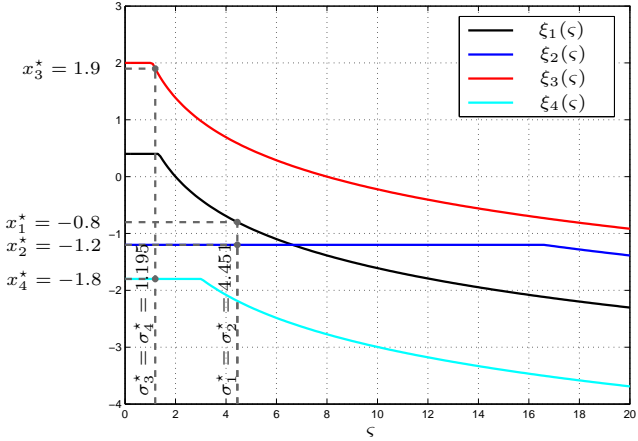}
\end{center}
\caption{Graphical illustration of the solutions $x_n^\star$. The intersection of $\xi_n(\varsigma)$ with the vertical dashed line at $\varsigma= \sigma_n^\star$ yields $x_n^\star$.}
\label{fig1}
\end{figure}

\vspace{-0.1cm}
\section{Graphical interpretations}\label{graphical_interpretation}

In the next, we provide graphical interpretations of the general policy spelled out by Theorem 1 and Lemma 3. 

\subsection{Charts}
A direct depiction of Theorem 1 and Lemma 3 can be easily obtained by plotting $c_n(\varsigma)$ and $\xi_n(\varsigma)$ for $n=1,2,\ldots,N$ as a function of $\varsigma \ge 0$. From \eqref{100.10}, it follows that the intersections of curves $c_n(\varsigma)$ with the horizontal lines at $\gamma_n$ yield $\varsigma_n^\star$ from which $\mu^\star$ and $k^\star$ are computed as indicated in \eqref{101.10} and \eqref{102.10}. According to \eqref{x_n^star}, the solutions $x_n^\star$ for $n=1,2,\ldots,k^\star$ correspond to the interception of the corresponding functions $\xi_n(\varsigma)$ with the vertical line at $\varsigma = \sigma_n^\star = \mu^\star$. Once $x_n^\star$ for $n=1,2,\ldots,k^\star$ are computed, the algorithm proceeds with the computation of the remaining solutions by solving the corresponding reduced problem.

For illustration purposes, we assume $N=4$, $l_n = -\infty$ for any $n$, $\mathbf{u}=[0.4, -1.2, 2, -1.8]$ and ${\boldsymbol{\rho}} = [0.2, -2, 1.1, -1.9]$. In addition, we set 
\begin{align}\label{5.1}
f_n(x_n) = w_n e^{-x_n}
\end{align}
with $[w_1, w_2, w_3, w_4]=[2, 5, 8, 0.5]$. Then, it follows that $h_n(x_n) ={w_n}e^{-x_n}$ and $h_n^{-1}(\varsigma) = \ln w_n -\ln \varsigma$.
Then, from \eqref{xi_n} we obtain
\begin{align}\label{xi_n.0}
\xi_n(\varsigma) = \left\{ {\begin{array}{*{20}c}
  {u_n } & {0 \le \varsigma < w_n e^{-u_n} } \\ \\
  \ln w_n -\ln \varsigma & { w_n e^{-u_n} \le \varsigma}\\
\end{array}} \right.
\end{align}
or, more compactly,
\begin{align}\label{xi_n.0}
\xi_n(\varsigma) = \min\left\{\max\left\{ \ln w_n -\ln \varsigma,0\right\},u_n\right\}
\end{align}
whose graph is shown in Fig. \ref{fig1}.

As seen, the first operation of Algorithm 1 is to compute the quantities $\varsigma_n^\star$ for $n=1,\ldots, 4$ according to step b). Since the condition $\gamma_n \le \sum\nolimits_{i=1}^n u_i$ is satisfied for $n=1,2,\ldots, 4$, the computation of $\varsigma_n^\star$ requires to solve \eqref{100.10} for $n=1,2,\ldots, 4$. Using \eqref{xi_n.0}, we easily obtain:
\begin{align}
\varsigma_1^\star &= e^{\ln w_1 - \gamma_1} = 1.637\\
\varsigma_2^\star &= e^{{\ln w_1 + u_2 - \gamma_2}} = 4.451\\
\varsigma_3^\star &= e^{{\ln w_3 + u_1 + u_2 - \gamma_3}} = 1.196\\
\varsigma_4^\star &= e^{\frac{\ln w_1 + \ln w_3 + u_2 + u_4 - \gamma_4}{2}} = 2.307.
\end{align}
%
%
%
A direct depiction of the above results can be easily obtained by plotting $c_n(\varsigma)$ for $n=1,2,\ldots,4$ as a function of $\varsigma \ge 0$. As shown in Fig. \ref{fig2}, the intersections of curves $c_n(\varsigma)$ with the horizontal lines at $\varsigma = \gamma_n$ yield $\varsigma_n^\star$.

\begin{figure}[t]
\begin{center}
\psfrag{x1}[r][m]{\scriptsize{$\gamma_1 = 0.2$}}
\psfrag{x2}[r][m]{\scriptsize{$\gamma_2 = -2$}}
\psfrag{x3}[r][m]{\scriptsize{$\gamma_3 = 1.1$}}
\psfrag{x4}[c][m]{\scriptsize{$ \gamma_4 = -1.9$\quad\quad\quad\quad}}
\psfrag{x6}[c][m][1][90]{\scriptsize{$\varsigma_1^\star = 1.637$}}
\psfrag{x8}[c][m][1][90]{\scriptsize{$\varsigma_2^\star = 4.451$}}
\psfrag{x7}[c][m][1][90]{\scriptsize{$\varsigma_3^\star = 1.196$}}
\psfrag{x5}[c][m][1][90]{\scriptsize{$\;\varsigma_4^\star = 2.307$}}
\psfrag{x9}[c][m]{\scriptsize{$\varsigma$}}
\psfrag{data1}[c][m]{\scriptsize{$\quad c_1(\varsigma)$}}
\psfrag{data2}[c][m]{\scriptsize{$\quad c_2(\varsigma)$}}
\psfrag{data3}[c][m]{\scriptsize{$\quad c_3(\varsigma)$}}
\psfrag{data4}[c][m]{\scriptsize{$\quad c_4(\varsigma)$}}
\includegraphics[width=0.9\columnwidth]{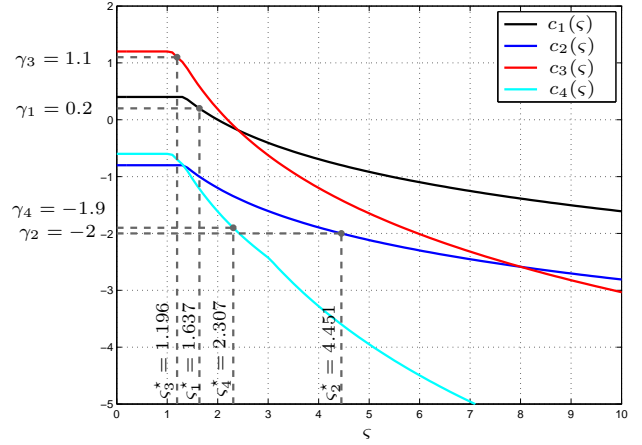}
\end{center}
\caption{Graphical illustration of $c_n(\varsigma)$. Their intersection with the horizontal dashed lines at $\gamma_1 = 0.2$, $\gamma_2 = -2$, $\gamma_3 = 1.1$ and $\gamma_4 = -1.9$ yields respectively $\varsigma_1^\star = 1.637$, $\varsigma_2^\star = 4.451$, $\varsigma_3^\star = 1.196$ and $\varsigma_4^\star = 2.307$. \vspace{-0.3cm}}
\label{fig2}
\end{figure}

Using the above results into \eqref{101.10} and \eqref{102.10} of step c) yields $\mu^\star = 4.451$ and $k^\star = 2$
from which (according to step d)) we obtain 
\begin{align}
\sigma_1^\star=\sigma_2^\star=\mu^\star = 4.451. 
\end{align}
Once the optimal $\sigma_1^\star$ and $\sigma_2^\star$ are computed, Algorithm 1 proceeds solving the following reduced problem:
\begin{align}
  \underset{\{x_3,x_4\}}{\min} \quad &
\sum\limits_{n=3}^4 {w_n}e^{-x_n}\\\nonumber
\text{subject to} \quad & \sum \limits_{n=3}^{j} x_n\le \gamma_j\quad j=3,4\\\nonumber
 & x_n \le u_n \quad n=3,4
\end{align}
with $\gamma_3 = 3.1$ and $\gamma_4 = 0.1$ as obtained from $\gamma_j \leftarrow \gamma_j - \gamma_{k^\star}$
observing that $\gamma_{k^\star}=\gamma_2 = -2$. Since $\gamma_3 > u_3 $, from step b) we have that $\varsigma_3^\star = 0$ while $\varsigma_4^\star$ turns out to be given by 
\begin{align}
\varsigma_4^\star = e^{{\ln w_3 + u_4 - \gamma_4}} = 1.195.
\end{align}
As before, $\varsigma_4^\star$ can be obtained as the intersection of new function 
\begin{align}
c_4(\varsigma) = \sum\limits_{n=3}^4 \xi_n(\varsigma) 
\end{align}
with the horizontal line at $\varsigma = \gamma_4 = 0.1$. Then, from \eqref{101.10} and \eqref{102.10}, we have that 
\begin{align}
\mu^\star = \mathop {\max }\nolimits_{n=3,4 } \varsigma_n^\star= 1.195
\end{align}
and thus $k^\star = 4$. This means that $\sigma_3^\star= \sigma_4^\star=1.195$.

The optimal $x_n^\star$ are eventually obtained as $x_n^\star = \xi_n(\sigma_n^\star)$. This yields $x_1^\star = -0.8$, $x_2^\star = -1.2$, $x_3^\star = 1.9$ and $x_4^\star = -1.8$. As depicted in Fig. \ref{fig1}, the solution $x_n^\star$ corresponds to the interception of $\xi_n(\varsigma)$ with the vertical line at $\varsigma = \sigma_n^\star$.

 \subsection{Water-filling inspired policy}
 {While the charts used in the foregoing example are quite useful, we put forth an alternative interpretation that allows retaining some of the intuition of the water-filling policy. This interpretation is valid for cases in which the optimization variables $\{x_n;\,n=1,2,\ldots,N\}$ can only take non-negative values, which amounts to setting $l_n=0$ for $n=1,2,\ldots,N$. 
 
We start considering the simple case in which a single linear constraint is imposed:
\begin{align}\label{wfl1}
 \underset{\{x_n\}}{\min} \quad &
\sum\limits_{n=1}^N f_n(x_n)\\
\nonumber  \text{subject to} \quad & \sum \limits_{n=1}^{N} x_n\le \rho_N \\
 \nonumber & 0 \le x_n \le u_n \quad  n=1,2,\ldots,N.
\end{align}
Using the results of Theorem 1 and Lemma 3, the solution to \eqref{wfl1} is found to be
\begin{equation}\label{wfl2}
x_n^\star= \xi_n(\sigma^\star)=\min\left\{\max\left\{h_n^{-1}(\sigma_n^\star), 0\right\}, u_n\right\}
\end{equation}
where the values of $\sigma_n^\star$ are obtained through Algorithm 1. Since a single constraint is present in \eqref{10}, then a single iteration is required to compute all the values of $\sigma_n^\star$ for $n=1,2,\ldots,N$. In particular, it turns out that $\sigma_n^\star = \sigma^\star$ for any $n$, with $\sigma^\star$ such that the following condition is satisfied:
\begin{align}\label{wfl4}
\sum\limits_{n=1}^N x_n^\star =\sum\limits_{n=1}^N \xi_n(\sigma^\star)= \rho_N.
\end{align}
%
%

 \begin{figure}[t]
\begin{center}
\includegraphics[width=.55\textwidth]{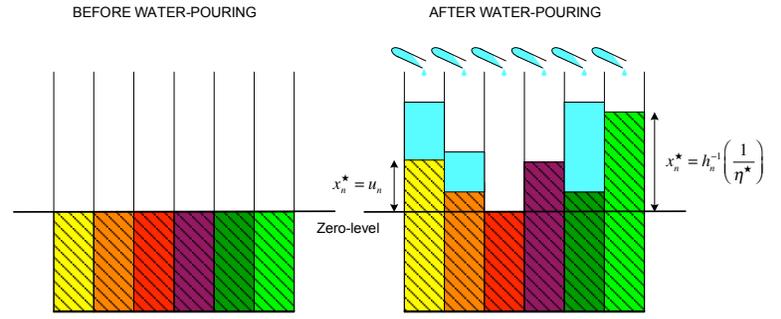}
\end{center}
\caption{Water-filling inspired interpretation of the solutions $x_n^\star$ .}
\label{GWF1}\vspace{-0.4cm}
\end{figure}

Consider now $N$ vessels, which are filled with a proper material (different from vessel to vessel) up to a given level. Think of it as the \textit{zero-level} and assume that it is the same for all vessels, as illustrated in Fig. \ref{GWF1} for $N = 6$. Assume that a certain quantity $\eta$ of water (measured in proper units) is poured into each vessel and let each material be able to first absorb it and then to expand accordingly up to a certain level. In particular, assume that the behaviour of material $n$ is regulated by $\xi_n(\varsigma)$ with $\varsigma= 1/\eta$. More precisely, $\xi_n(\varsigma)$ is the difference between the new level of material $n$ and the zero-level. From \eqref{xi_n}, it easily follows that the expansion starts only when $\eta$ reaches the level $\eta=1/h_n(0)$ while it stops when $\eta=1/h_n(u_n)$, corresponding to a maximum expansion of $\xi_n(\varsigma)=u_n$. This means that additional water beyond the quantity $1/h_n(u_n)$ does not produce any further expansion - it is simply accumulated in vessel $n$ above the level $u_n$ as depicted in Fig. \ref{GWF1}.

Using \eqref{wfl2} and \eqref{wfl4}, the solutions $\{x_n^\star\}$ to \eqref{wfl1} can thus be interpreted as obtained trough the following procedure, which is reminiscent of the water-filling policy.

\begin{enumerate}
\item Consider $N$ vessels;
\item Assume vessel $n$ is filled with a proper material up to a certain  zero-level (the same for each vessel);
\item Let the behaviour of material $n$ be regulated by $\xi_n$;
\item Compute $\sigma^\star$ through \eqref{wfl4}; 
\item Poor the same quantity $\eta^\star=1/\sigma^\star$ of water into each vessel; 
\item The material height over the zero-level in vessel $n$ gives $x_n^\star$.
\end{enumerate}



The extension of the above water-filling interpretation to the general form in \eqref{1} is straightforward. Assume that the $j$th iteration is considered. Then, Algorithm 1 proceeds as follows.

\begin{enumerate}
\item Consider $N_j$ vessels with indices $n=j+1,\ldots,N$;
\item Assume vessel $n$ is filled with a proper material up to a certain  zero-level (the same for each vessel);
\item Let the behaviour of material $n$ be regulated by $\xi_n$;
\item Compute $\mu^\star$ and $k^\star$ through \eqref{101.10} and \eqref{102.10}; 
\item Poor the same quantity $\eta^\star=1/\mu^\star$ of water into  vessels $n = j+1,\ldots,k^\star$; 
\item The material height over the zero-level gives $x_n^\star$ for $n = j+1,\ldots,k^\star$.
\end{enumerate}

\begin{remark}
Observe that the speed by which material $n$ expands itself depends on  $\xi_n^\prime $ defined as the first derivative of $\xi_n$ with respect to $\eta = 1/\varsigma$. It can be easily shown that 
\begin{align}\label{pippo}
\xi_n^\prime = \frac{1}{\eta^2 f_n^{\prime\prime}\left(h_n^{-1}(1/\eta)\right)}
\end{align}
from which it follows that the rate of growth is inversely proportional to the second derivative of 
$f_n$ evaluated at $h_n^{-1}(1/\eta)$. 
\end{remark}
}


\vspace{-0.2cm}
\section{Particularization to power allocation problems}\label{examples}
In the following, we show how some power allocation problems in signal processing and communications can be put in the form of \eqref{1}, and thus can be solved with the generalized algorithm illustrated above\footnote{Due to the considerable amount of works in this field, our exposition will be necessarily incomplete and will reflect the subjective tastes and interests of the authors. To compensate for this partiality, we refer the interested reader to the list of references for an entree into the extensive literature on this subject.}. 
%

\vspace{-0.3cm}
\subsection{{Classical water-filling and cave-filling policies}}

\begin{figure}[t]
\begin{center}
\includegraphics[width=.4\textwidth]{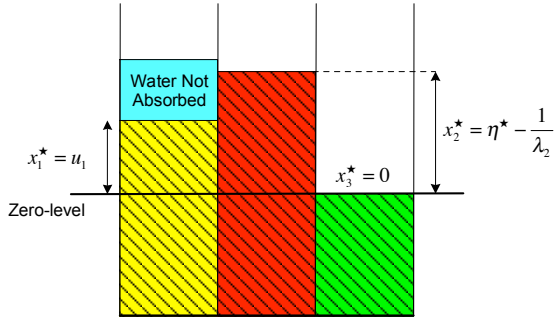}
\end{center}
\caption{Illustration of the water-filling inspired policy for problem \eqref{10} when $N=3$.}
\label{Fig8_new}
\end{figure}

Consider the classical problem of allocating a certain amount of power $P$ among a bank of non-interfering channels to maximize the capacity. This problem can be mathematically formulated as follows:
\begin{align}\label{10}
   \underset{\{x_n\}}{\max} \quad &
\sum\limits_{n=1}^N \log(1+\lambda_n x_n)\\\nonumber
\text{subject to} \quad & \sum \limits_{n=1}^{N} x_n\le P \quad \\\nonumber
 & 0 \le x_n \le u_n \quad n=1,2,\ldots,N
\end{align}
where $x_n$ represents the transmit power allocated over the $n$th channel of gain $\lambda_n$ whereas $\log(1+\lambda_n x_n)$ gives the capacity of the $n$th channel. Clearly, we assume that $\sum_{n=1}^{N} u_n > P$, otherwise \eqref{10} has the trivial solution $x_n^\star=u_n$.

The above problem can be put in the same form of \eqref{1mod} setting $f_n(x_n) = -\log (1+\lambda_n x_n)$, 
$l_n=0$ $\forall n$, $\mathcal{L}=\{N\}$ and $\rho_N = P$. Observing that 
\begin{align}
h_n^{-1}(\varsigma) = \frac{1}{\varsigma} - \frac{1}{\lambda_n} 
\end{align}
from \eqref{x_n^star_2} one gets
\begin{equation}\label{cavefilling}
x_n^\star=\min\left\{\max\left\{\dfrac{1}{\sigma^\star} - \dfrac{1}{\lambda_n},0\right\},u_n\right\}
\end{equation}
with $\sigma^\star$ such that 
\begin{align}\label{constraint}
\sum\limits_{n=1}^N x_n^\star =\sum\limits_{n=1}^N \xi_n(\sigma^\star)= P.
\end{align}
{Using the water-filling policy illustrated in Section \ref{graphical_interpretation}, the solutions in \eqref{cavefilling} have the visual interpretation shown in Fig.~\ref{Fig8_new}, where we have assumed $N=3$ and set $\eta^{\star}=1/\sigma^\star$.  The material inside the $n$th vessel starts expanding when the quantity of water $\eta$ poured in the vessel equals $1/\lambda_n$. Due to the particular form of $f_n$, the expansion follows the linear law $\xi_n(1/\eta)= \eta-1/\lambda_n$ as long as $\eta \le u_n+1/\lambda_n$. After that, water is no more absorbed and the expansion stops. The additional water is accumulated in the vessel above the maximum level of the material. As shown in Fig.~\ref{Fig8_new}, this is precisely what happens with the yellow material in vessel $1$. On the other hand, we have that $\eta^\star-1/\lambda_2 < u_2$ and thus no water is accumulated on the top of the red material in vessel $2$. Finally, the green material in vessel $3$ is such that no expansion occurs since $\eta^\star<1/\lambda_3$.}

\begin{figure}[t]
\begin{center}
\includegraphics[width=.4\textwidth]{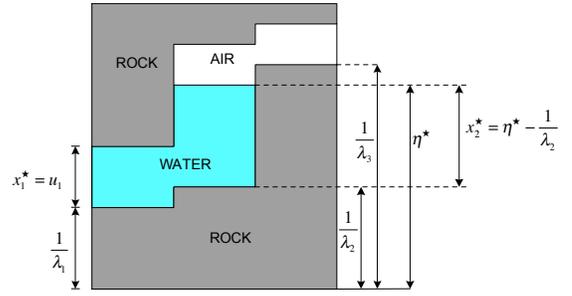}
\end{center}
\caption{Illustration of the cave-filling policy for problem \eqref{10} when $N=3$.}
\label{Fig8}
\end{figure}

An alternative visual interpretation of \eqref{cavefilling} (commonly used in the literature) is given in Fig.~\ref{Fig8}, where ${1}/{\lambda_n}$ and $u_n + {1}/{\lambda_n}$ are viewed as the ground and the ceiling levels of patch $n$, respectively. In this case, the solution is computed as follows. We start by flooding the region with water to a level $\eta$. The total amount of water used is then given by
\begin{align}
\sum\limits_{n=1}^N \min\left\{\max\left\{\eta - \frac{1}{\lambda_n}, 0\right\},u_n\right\}.
\end{align}
The flood level is increased until a total amount of water equal to $P$ is used. The depth of water inside patch $n$ gives $x_n^\star$. This solution method is known as cave-filling due to its specific physical meaning. Clearly, if $u_n = +\infty$ for any $n$ in \eqref{10} then $x_n^\star$ reduces to
\begin{equation}\label{}
x_n^\star=\max\left\{\dfrac{1}{\sigma^\star} - \dfrac{1}{\lambda_n},0\right\}.
\end{equation}
which is the well-known and classical water-filling solution.

A problem whose solution has the same visual interpretation of Fig. \ref{Fig8} is considered also in \cite{Gao2008} (see problem (21)) in which the authors design the optimal training sequences for channel estimation in multi-hop transmissions using decode-and-forward protocols.

\vspace{-0.3cm}
\subsection{{General water-filling policies}}

\begin{figure}[t]
\begin{center}
\includegraphics[width=.4\textwidth]{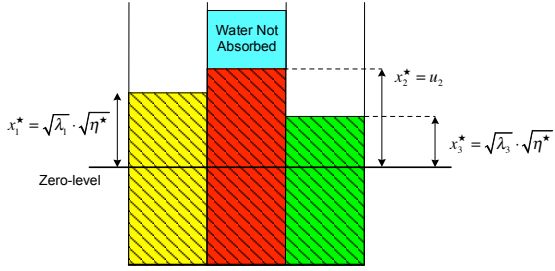}
\end{center}
\caption{Illustration of the water-filling inspired policy for problem \eqref{60} .}
\label{Fig11}
\end{figure}


Consider now the following problem:
\begin{align}\label{60}
   \underset{\{x_n\}}{\min} \quad &
\sum\limits_{n=1}^N \frac{\lambda_n}{x_n}\\\nonumber
\text{subject to} \quad & \sum \limits_{{n=1}}^{{j}} x_n\le \rho_j \quad j=1,2,\ldots,N\\\nonumber
 & 0 \le x_n \le 1 \quad n=1,2,\ldots,N
\end{align}
where $\{\lambda_n > 0\}$ are positive parameters. This problem is  considered in \cite{Sanguinetti2012} in the context of linear transceiver design architectures for MIMO networks with a single non-regenerative relay. It also appears in \cite{PalomarQoS2004} where the authors deal with the linear transceiver design problem in MIMO point-to-point networks to minimize the power consumption while satisfying specific QoS constraints on the mean-square-errors (MSEs). A similar instance can also be found in \cite{Palomar2003} and corresponds to the minimization of the weighted arithmetic mean of the MSEs in a multicarrier MIMO system with a total power constraint. All the above examples could in principle be solved with (specifically designed) multi-level water-filling algorithms \cite{Palomar2005}. Easy reformulations allow to use the more general Algorithm 1 as shown next for problem \eqref{60}.

%
%
 Setting
$f_n(x_n) = {\lambda_n}/{x_n}$
and letting $l_n=0$ and $u_n = 1$, $\forall n$, it is easily seen that \eqref{60} has the same form as \eqref{1}. Then, one gets $h_n(x_n) = {\lambda_n}/{x_n^2}$
and $h_n^{-1}(\varsigma) = \sqrt{{\lambda_n}/{\varsigma}}.$
The solution to \eqref{60} is given by
\begin{align}\label{65}
x_n^\star=\min\left\{\max\left\{\sqrt{\dfrac{\lambda_n}{\sigma_{n}^\star}},0 \right\},1\right\}
\end{align}
where $\{\sigma_{1}^\star, \sigma_{2}^\star,\ldots,\sigma_{N}^\star\}$ are computed through Algorithm 1 and take the form \eqref{rem3_1} with $\{\mu_{1}^{\star},\mu_{2}^{\star},\ldots,\mu_{{L-1}}^{\star}\}$ such that
\begin{align}
\sum\limits_{n= k_{j-1}^{\star}+1}^{k_{j}^{\star}} x_n^\star = \rho_{k_{j}^{\star}}- \rho_{k_{j-1}^{\star}} \quad j=1,2,\ldots,{L-1}.
\end{align}
{According to Remark 4}, if $\mu_{L}^{\star}$ is greater than $0$ then
\begin{align}
\sum\limits_{n= k_{L-1}^{\star}+1}^{N} x_n^\star = \rho_{N}- \rho_{k^{\star}_{L-1}}
\end{align}
otherwise when $\mu_{L}^{\star} = 0$ one gets
\begin{equation}
\label{ }
x_n^\star=1 \quad \quad n=k_{L-1}^{\star}+1,\ldots,N.
\end{equation}
{The solutions $x_n^\star$ in \eqref{65} can be thought as obtained through the water-filling policy illustrated in Section \ref{graphical_interpretation} in which the expansion of material $n$ is regulated by the square-root law $\xi_n(1/\eta) = \sqrt{\lambda_n\eta}$ with rate of growth given by
\begin{equation}
\label{ }
\xi_n(1/\eta)=\sqrt{\frac{\lambda_n}{\eta}},
\end{equation}
according to \eqref{pippo}. This is illustrated in Fig. \ref{Fig11} wherein we consider the first iteration of Algorithm 1 under the assumption that  $k_1^\star = 3$ and $\lambda_3<\lambda_1<\lambda_2$. As expected, the level of the red material in the $2$nd vessel is higher than the others.}

%
%
%

\vspace{-0.3cm}
\subsection{Some other examples}

Consider now the following problem:
\begin{align}\label{70}
   \underset{\{x_n\}}{\min} \quad &
\sum\limits_{n=1}^N {\lambda_n}{e^{-x_n}}\\\nonumber
\text{subject to} \quad & \sum \limits_{{n=1}}^{{j}} x_n\le \rho_j \quad j=1,2,\ldots,N\\\nonumber
 & x_n \le 0 \quad n=1,2,\ldots,N.
\end{align}
The above problem arises in \cite{Jiang2006} where the authors deal with the power minimization in MIMO point-to-point networks with non-linear architectures at the transmitter or at the receiver. A similar problem arises when two-hop MIMO networks with a single amplify-and-forward relay are considered \cite{Sanguinetti2012}.
The solution of \eqref{70} has the form
\begin{equation}\label{72}
x_n^\star=\min\left\{\max\left\{\log\left(\dfrac{\lambda_n}{\sigma_n^\star}\right),0 \right\},u_n\right\}
\end{equation}
where the quantities $\sigma_{n}^\star$ are given by \eqref{rem3_1}.



Another instance of \eqref{1} arises in connection with the computation of the optimal power allocation for the maximization of the instantaneous received signal-to-noise ratio in amplify-and-forward multi-hop transmissions under short-term power constraints \cite{Farhadi09}. Denoting by $N$ the total number of hops, the problem can be mathematically formalized as follows \cite{Farhadi09}
\begin{align}\label{600}
   \underset{\{x_n\}}{\max} \quad &
\left(\prod_{n=1}^N \left(1+\frac{1}{x_n\lambda_n}\right)-1\right)^{-1}\\\nonumber
\text{subject to} \quad & \sum \limits_{n=1}^{N} x_n\le P \\\nonumber
 & 0 \le x_n \le u_n \quad n=1,2,\ldots,N.
\end{align}
where $x_n$ represents the power allocated over the $n$th hop and $P$ denotes the available power. In addition, $\lambda_n$ is the channel gain over the $n$th hop. The above problem can be equivalently reformulated as follows
\begin{align}\label{601}
   \underset{\{x_n\}}{\min} \quad &
\sum_{n=1}^N \log \left(1+\frac{1}{x_n\lambda_n}\right)\\\nonumber
\text{subject to} \quad & \sum \limits_{n=1}^{N} x_n\le P \\\nonumber
 & 0 \le x_n \le u_n \quad n=1,2,\ldots,N
\end{align}
from which it is clear that it is in the same form as \eqref{1mod} with
\begin{align}
f_n(x_n) = \log \left(1+\frac{1}{x_n\lambda_n}\right)
\end{align}
$\mathcal L = \{N\}$ and $\rho_N = P$. Then, 
\begin{align}\label{605}
h_n^{-1}(\varsigma) = \frac{\sqrt{1+\frac{4\lambda_n}{\varsigma}}-1}{2\lambda_n}.
\end{align}
It is assumed $\sum_{n=1}^{N} u_n > P$, otherwise \eqref{601} has the trivial solution $x_n^\star=p_n$. Using \eqref{605} into \eqref{x_n^star} yields
\begin{equation}\label{x_n.3}
x_n^\star=\min\left\{\max\left\{\frac{1}{2\lambda_n}\Big({\sqrt{1+\dfrac{4\lambda_n}{\sigma^\star}}-1}\Big),0 \right\},u_n\right\}
\end{equation}
with $\sigma^\star$ such that $\sum\nolimits_{n=1}^N x_n^\star = P$.

\section{Conclusions}\label{conclusions}

{An iterative algorithm has been proposed to compute the solution of separable convex optimization problems with a set of linear and box constraints. The proposed solution operates through a two layer architecture, which has a simple graphical water-filling inspired interpretation. The outer layer requires at most $N-1$ steps with $N$ being the number of linear constraints whereas the number of iterations of the inner layer depends on the complexity of solving a set of (possibly) non-linear equations. If solvable in closed form, then the computational burden of the inner layer is negligible. The problem under investigation is particularly interesting since a large number of existing (and likely future) power allocation problems in signal processing and communications can be reformulated as instances of its general form, and thus can be solved with the proposed algorithm without the need of developing specific solutions for each of them.}

\section*{Appendix A \\ Proof of Lemmas 1 and 2}
We start considering case {\bf{a}}). Without loss of generality, we concentrate on $f_1$, which is assumed monotonically increasing in $[l_1,u_1]$, and aim at proving that $x^\star_1 = l_1$. {We start denoting by $\mathcal{S}(x_1)$ the feasible set of $x_2,x_3,\ldots,x_N$ for a given $x_1 \in [l_1,u_1]$. Mathematically, $\mathcal{S}(x_1)$ is such that
\begin{align}\label{Opt_Prob_Red.1}
&\sum\limits_{n=2}^{j}x_n \le \rho_{n}-x_1 \quad j=2,\ldots,N \\\nonumber
   & l_{n} \le x_n \le u_{n} \quad n=2,\ldots,N.
 \end{align}
 Clearly, we have that $\mathcal{S}(x_1) \subseteq \mathcal{S}(l_1)$
 for any $x_1 \, \in \, (l_1,u_1]$. For notational convenience, we also define $F(x_1)$ as
   \begin{align}\label{Opt_Prob_Red.2}
   F(x_1) = \underset{\{x_2,x_3,\ldots,x_N\} \in \mathcal{S}(x_1)}{\min} \quad
  \sum\limits_{n=2}^{N}f_{n}({x_n}).
 \end{align}
Observe now that the optimal value $x^{\star}_{1}$ is such that $f_1(x_1) + F(x_1)$ is minimized. To this end, we recall that: $\mathbf{i}$) $ \; f_1(l_1) < f_1(x_1)$ since $f_1$ is strictly increasing in $[l_1, u_1]$; $\mathbf{ii}$) $\;F_1(l_1) \le F_1(x_1)$ since $\mathcal{S}(x_1) \subseteq \mathcal{S}(l_1)$ for any $x_1 \, \in \, (l_1,u_1]$. Therefore, it easily follows that $f_1(l_1) + F(l_1) < f_1(x_1) + F(x_1)$ for any $x_1 \, \in \, (l_1,u_1]$, which proves that $x^{\star}_{1}= l_1$. The same result can easily be extended to a generic $x_n$ with $n \ne 1$ using similar arguments. This proves Lemma 1.}

{Consider now case {\bf{b}}) and assume that there exists a point $z_n$ in $(l_n,u_n)$ such that $f_n^{'} (z_n) = 0 $ with $f_n'(x_n)<0\; \forall x_n \in (l_n, z_n)$ and $f_n'(x_n)>0\; \forall x_n \in (z_n, u_n)$. We aim at proving that $x^\star_n \in [l_n,z_n]$}

{Since $f_n'(x_n)>0\; \forall x_n \in (z_n, u_n)$, then $f_n(x_n)$ is monotonically increasing in $[z_n,u_n]$. Consequently, by Lemma 1 it follows that $x^\star_n$ cannot be greater than $z_n$. This amounts to saying that $x^{\star}_n$ must belong to the interval $[l_n,z_n]$, as stated in \eqref{solB}.}

Finally, for case {\bf{c}}) nothing can be said a priori apart for that the solution $x^{\star}_n$ lies in interval $[l_n,u_{n}]$ as required by the box constraints in \eqref{1}.

\section*{Appendix B \\ Proof of Theorem 1}

We begin by writing the Karush-Kuhn-Tucker (KKT) conditions for the convex problem $(\mathcal{P})$:
\begin{equation}\label{KKT1}
 -h_{n}(x_n)+\sum\limits_{j=n}^{N}\lambda_j+\nu_n-\kappa_n=0 \quad \quad n=1,\ldots,N
\end{equation}
\begin{equation}\label{KKT2}
 0 \le \lambda_n \; \bot \; \Big(\sum\limits_{j=1}^{n}x_j-\rho_n\Big) \le 0 \quad \quad n=1,\ldots,N
\end{equation}
\begin{equation}\label{KKT3.1}
 0 \le \nu_n \; \bot \; \left(x_n-u_n\right) \le 0 \quad \quad n=1,\ldots,N
\end{equation}
\begin{equation}\label{KKT4.1}
 0\le \kappa_n \; \bot \; \left(x_n-l_n\right) \ge 0 \quad \quad n=1,\ldots,N
\end{equation}
where $h_{n}(x) = -f^{\prime}_{n}(x)$.
Letting $\sigma_n =\sum\nolimits_{j=n}^{N}\lambda_j$ and $\sigma_{N+1} = 0$, we may rewrite \eqref{KKT1} -- \eqref{KKT4.1} in the following equivalent form:
\begin{equation}\label{KKT1eq}
 -h_{n}(x_n)+\sigma_n+\nu_n-\kappa_n=0 \quad \quad n=1,\ldots,N
\end{equation}
\begin{equation}\label{KKT2eq}
 \sigma_n \ge 0 \quad \quad \sigma_{N+1} = 0
 \end{equation}
 \begin{equation}\label{KKT3eq}
 0 \le (\sigma_n-\sigma_{n+1}) \; \bot\; \Big(\sum\limits_{j=1}^{n}x_j-\rho_n\Big) \le 0 \quad n=1,\ldots,N
\end{equation}
\begin{equation}\label{KKT3}
0 \le \nu_n \; \bot \; \left(x_n-u_n\right) \le 0 \quad \quad n=1,\ldots,N
\end{equation}
\begin{equation}\label{KKT4}
 0\le \kappa_n \; \bot \; \left(x_n-l_n\right) \ge 0 \quad \quad n=1,\ldots,N.
\end{equation}

Since ($\mathcal{P}$) is convex, solving the KKT conditions is equivalent to solving ($\mathcal{P}$). {Accordingly, we let $x^{\star}_{n}$, $\nu^{\star}_{n}$, $\kappa^{\star}_{n}$ and $\sigma_n^{\star}$ to denote the solution of \eqref{KKT1eq} -- \eqref{KKT4} for $n=1,\ldots,N$. In the next, it is shown that $x^{\star}_{n}$, $\nu^{\star}_{n}$ and $\kappa^{\star}_{n}$ are given by}
\begin{equation}\label{xsol}
 x_{n}^{\star} = \xi_{n}(\sigma_{n}^{\star})
\end{equation}
\begin{equation}\label{nisol}
 \nu_{n}^{\star} = \max \{h_{n}(x_{n}^{\star})-\sigma_{n}^{\star},0 \}
\end{equation}
\begin{equation}\label{kappasol}
 \kappa_{n}^{\star} = \max \{\sigma_{n}^{\star} -h_{n}(x_{n}^{\star}),0 \}
\end{equation}
where 
$\xi_n(\sigma_{n}^{\star})$ is computed as in \eqref{xi_n} with $\varsigma=\sigma_n^{\star}$:
\begin{equation}\label{csi_n}
 \xi_{n}(\sigma_{n}^{\star}) = \left\{\begin{array}{cl}
             u_{n} & 0 \le \sigma_{n}^{\star} < h_{n}(u_n)\\ \\
             h^{-1}_{n}(\sigma_{n}^{\star}) & h_{n}(u_n) \le \sigma_{n}^{\star} < h_{n}(l_n) \\ \\
             l_n & h_{n}(l_n)\le \sigma_{n}^{\star}. \\
            \end{array}
 \right.
\end{equation}
{The following three cases are considered separately:} $\mathbf{a}$) $x_{n}^{\star}=l_n$; $\mathbf{b}$) $l_n < x_{n}^{\star} < u_n$; $\mathbf{c}$) $x_{n}^{\star} = u_n$.

Case $\mathbf{a}$) If $x_{n}^{\star}=l_n$ then from \eqref{KKT3} it immediately follows $\nu_{n}^{\star}=0$ whereas \eqref{KKT1eq} reduces to $-h_{n}(x_{n}^{\star})+\sigma_n^{\star}=\kappa_n^{\star}$,
from which using \eqref{KKT4} we get
\begin{equation}\label{pluto_1}
-h_{n}(x_{n}^{\star})+\sigma_n^{\star}\ge 0
\end{equation}
or, equivalently, $    \sigma_n^{\star} \ge h_{n}(x_{n}^{\star}) = h_{n}(l_n).$
{Using the above result into \eqref{csi_n} yields
\begin{equation}\label{}
 x_n^{\star}=l_n=\xi_n(\sigma_n^{\star})
\end{equation}
as stated in \eqref{xsol}.}
From the above results, it also follows that:
   \begin{align}
    \nu_{n}^{\star}&=0=\max\{h_{n}(x_{n}^{\star})-\sigma_n^{\star},0\}\\
    \kappa_{n}^{\star}&=\sigma_n^{\star}-h_{n}(x_{n}^{\star})=\max\{\sigma_n^{\star}-h_{n}(x_{n}^{\star}),0\}
   \end{align}
as given in \eqref{nisol} and \eqref{kappasol}, respectively.

Case $\mathbf{b}$) From \eqref{KKT3} and \eqref{KKT4} we obtain $\nu_{n}^{\star}=0$ and $\kappa_{n}^{\star}=0$ so that \eqref{KKT1eq} reduces to
\begin{equation}\label{KKT1eq_ii}
{- h_{n}(x_{n}^{\star})}+\sigma_n^{\star}=0
\end{equation}
from which we have that $x_{n}^{\star}=-h_n^{-1}(\sigma_n^{\star})$. Since in this case $l_n < x_{n}^{\star} < u_n$, then
\begin{equation}\label{sigma_n_star_ii}
h_{n}(u_{n}) < \sigma_n^{\star} < h_{n}(l_{n})
\end{equation}
so that we obtain
\begin{equation}\label{}
 x_n^{\star}=h_n^{-1}(\sigma_n^{\star})=\xi_n(\sigma_n^{\star}).
\end{equation}
Also, taking \eqref{KKT1eq_ii} into account yields
\begin{align}
\nu_{n}^{\star}&=0=\max\{h_{n}(x_{n}^{\star})-\sigma_n^{\star},0\}\\
\kappa_{n}^{\star}&=0=\max\{\sigma_n^{\star}-h_{n}(x_{n}^{\star}),0\}.
\end{align}

Case $\mathbf{c}$) In this case, from \eqref{KKT4} one gets $\kappa_{n}^{\star}=0$ whereas \eqref{KKT1eq} reduces to $\nu_n^{\star} = h_{n}(x_{n}^{\star})-\sigma_n^{\star}$. Since \eqref{KKT3} is satisfied for $\nu_n^{\star} \ge 0$, then $h_{n}(x_{n}^{\star})-\sigma_n^{\star} \ge 0$ or, equivalently,
\begin{equation}\label{sigma_n_star_iii}
 \sigma_n^{\star} \le h_{n}(x_{n}^{\star}) =h_{n}(u_{n}).
\end{equation}
Accordingly, we can write
\begin{equation}\label{}
 x_n^{\star}=u_n=\xi_n(\sigma_n^{\star})
\end{equation}
\begin{equation}
\nu_{n}^{\star}=h_{n}(x_{n}^{\star})-\sigma_n^{\star}=\max\{h_{n}(x_{n}^{\star})-\sigma_n^{\star},0\}
\end{equation}
and
\begin{equation}
\kappa_{n}^{\star}=0=\max\{\sigma_n^{\star}-h_{n}(x_{n}^{\star}),0\}.
\end{equation}

Using all the above results together, \eqref{xsol} -- \eqref{kappasol} easily follow from which it is seen that $x^{\star}_n$ depends solely on $\sigma_n^{\star}$. The latter must be chosen so as to satisfy \eqref{KKT2eq} and \eqref{KKT3eq}.

%
\section*{Appendix C \\ Proof of Lemma 3}

\begin{algorithm}[t]
\caption{Equivalent form of {\bf{Algorithm 1}}.}

\begin{enumerate}
\item Set $j=1$, $k_0^{\star} = 0$ and $\gamma_n = \rho_n$ for every $n$.
\item {{\bf{While}}} $k_{j-1}^{\star}< N$
\begin{enumerate}
\item Set $\mathcal N_{k_j^{\star}}= \{k_{j-1}^{\star}+1,k_{j-1}^{\star}+2, \ldots, N\}$.
\item For every $n$ in $\mathcal N_{k_j^{\star}}$.

\begin{enumerate}
\item If $\gamma_n < \sum\nolimits_{i=k_{j-1}^{\star}+1}^n u_i$ then compute $\varsigma_{n,j}^\star$ as the solution of
\begin{align}\label{100}
c_n(\varsigma;j)=\sum\limits_{i=k_{j-1}^{\star}+1}^n \xi_i(\varsigma) = \gamma_n
\end{align}
for $\varsigma$.
\item If $\gamma_n \ge \sum\nolimits_{i=k_{j-1}^{\star}+1}^n u_i$ then set
\begin{align}\label{101.0120}
\varsigma_{n,j}^\star = 0.
\end{align}
\end{enumerate}
\item Evaluate
\begin{align}\label{101}
\mu_j^\star= \underset{n\in \mathcal{N}_{k_j^{\star}}}{\max} \quad \varsigma_{n,j}^\star
\end{align}
and
\begin{align}\label{102}
k_j^\star= \underset{n\in \mathcal{N}_{k_j^\star}}{ \max} \left\{n | \varsigma_{n,j}^\star=\mu_j^\star\right\}.
\end{align}
\item Set
\begin{align}\label{104}
\sigma_n^\star \leftarrow \mu_j^\star \quad \text{for}\;\; k_{j-1}^\star < n \le k_j^\star.
\end{align}
\item Set $\gamma_n \leftarrow \rho_n - \rho_{k_{j-1}^\star}$
for $k_{j-1}^\star < n \le N$.
\item Set $j\leftarrow j + 1$.
\end{enumerate}
\end{enumerate}
\end{algorithm}

For the sake of clarity, the steps of Algorithm 1 are put in the equivalent forms illustrated in {\bf{Algorithm 2}} in which basically some indices and equations are introduced or reformulated in order to ease understanding of the mathematical arguments and steps reported below.

{As seen, the $j$th iteration of Algorithm 2 computes the real parameter $\mu_j^{\star}$ and the integer $k_j^{\star} \in \{k_{j-1}^{\star}+1,\ldots,N\}$ through \eqref{101} and \eqref{102}, respectively. The latter are then used in \eqref{104} to obtain $\sigma_n^{\star}$ for $n=k_{j-1}^{\star}+1,\ldots,k_j^{\star}$:}
\begin{equation}\label{sigmasolkr}
 \sigma_{n}^{\star} = \mu_j^{\star} \quad \text{for} \;\; k_{j-1}^\star < n \le k_{j}^\star.
\end{equation}
In the next, it is shown that the quantities $\sigma_{n}^{\star}$ given by \eqref{sigmasolkr} satisfy \eqref{KKT2eq} and \eqref{KKT3eq}.

{We start proving that $\sigma_{n}^{\star} \ge 0$
as required in \eqref{KKT2eq}. Since the domain of the function $c_n(\varsigma;j)$ in \eqref{100} is the interval $[0,+\infty)$ then the solution $\varsigma_{n,j}^\star$ of $c_n(\varsigma;j)=\rho_n-\rho_{k_{j-1}^{\star}}$ is non-negative or it does not exist. In the latter case, Algorithm 2 sets $\varsigma_{n,j}^\star=0$ (according to \eqref{101.0120}) and hence $\varsigma_{n,j}^\star \ge 0$ in any case. This means that $\mu_j^{\star}$, as computed through \eqref{101}, is non-negative and, consequently, $\sigma_{n}^{\star}$ is non-negative as well.}

{To proceed further, we now show that
\begin{equation}\label{KKT3eq_AxC}
 \sigma_n^{\star}-\sigma_{n+1}^{\star} \ge 0
\end{equation}
for $n=1,\ldots,N$ as required in \eqref{KKT3eq}. To this end, we start observing that $\forall j$
\begin{align}\label{sn_inc_1}
\sigma_n^{\star}-\sigma_{n+1}^{\star} & =0 \quad \text{for} \;\; k_{j-1}^{\star} < n < k_{j}^{\star}
\end{align}
as immediately follows from \eqref{sigmasolkr}. On the other hand, for $n=k_{j}^{\star}$ one has
\begin{align}\label{sn_inc_2}
\sigma_{k_{j}^{\star}}^{\star}-\sigma_{k_{j}^{\star}+1}^{\star} =\mu_{j}^{\star}-\mu_{j+1}^{\star}.
\end{align}
From \eqref{sn_inc_1} and \eqref{sn_inc_2}, it clearly follows that to prove \eqref{KKT3eq_AxC} it suffices to show that $\mu_{j}^{\star} \ge \mu_{j+1}^{\star}$. To see how this comes about, we start observing that since each $\xi_n(\varsigma)$ in \eqref{xi_n} is non-increasing then $c_n(\varsigma;j)$ in \eqref{100} is non-increasing as well, so that we may write
\begin{equation}\label{1000}
c_n(\mu_j^{\star};j) \le c_n(\varsigma_{n,j}^\star;j) = \rho_n-\rho_{k_{j-1}^{\star}} \quad n=k_{j-1}^{\star}+1,\ldots,N
 \end{equation}
where we have taken into account that by definition $\mu_j^{\star} \ge \varsigma_{n,j}^\star$ (as it follows from \eqref{101}). In particular, \eqref{1000} is satisfied with equality for $n=k_{j}^{\star}$ whereas it is a strict inequality for $n>k_{j}^{\star}$. Indeed, it cannot exist an index $\bar{k} > k_{j}^{\star}$ such that $c_{\bar{k}}(\mu_j^{\star};j) = \rho_{\bar{k}} - \rho_{k_{j-1}^{\star}}$
because this would mean that $\mu_j^{\star}$ is solution of both $c_{\bar{k}}(\varsigma;j)=\rho_{\bar{k}}-\rho_{k_{j-1}^{\star}}$ and $c_{k_{j}^{\star}}(\varsigma;j)=\rho_{k_{j}^{\star}}-\rho_{k_{j-1}^{\star}}$. If that is the case, in applying \eqref{102} at the $j$th step $\bar{k}$ would have been chosen instead of $k_{j}^{\star}$.

Based on the above results, setting $n=k_{j+1}^{\star}>k_{j}^{\star}$ into \eqref{1000} yields
\begin{equation}\label{1022}
c_{k_{j+1}^{\star}}(\mu_j^{\star};j) < \rho_{k_{j+1}^{\star}} - \rho_{k_{j-1}^{\star}}.
 \end{equation}
Also, a close inspection of \eqref{100} reveals that for $n> k_j^{\star}$ $c_{n}(\varsigma;j)$ can be rewritten as follows
  \begin{equation}
c_{n}(\varsigma;j) = c_{k_j^{\star}}(\varsigma;j) + c_{n}(\varsigma;j+1)
 \end{equation}
from which setting $n=k_{j+1}^{\star}$ we obtain
  \begin{equation}\label{1010}
c_{k_{j+1}^{\star}}(\varsigma;j) = c_{k_j^{\star}}(\varsigma;j) + c_{k_{j+1}^{\star}}(\varsigma;j+1).
 \end{equation}
Replacing $\varsigma$ with $\mu_j^{\star}$ in \eqref{1010} yields
  \begin{equation}\label{1011}
c_{k_{j+1}^{\star}}(\mu_j^{\star};j) = \rho_{k_{j}^{\star}} - \rho_{k_{j-1}^{\star}} + c_{k_{j+1}^{\star}}(\mu_j^{\star};j+1)
 \end{equation}
 where we have taken into account that
  \begin{equation}
c_{k_j^{\star}}(\mu_j^{\star};j) = \rho_{k_{j}^{\star}} - \rho_{k_{j-1}^{\star}}
 \end{equation}
 as it easily follows from the definition of $c_{n}(\varsigma;j)$ in \eqref{100} and from those of $\mu_j^{\star}$ and $k_j^{\star}$ in \eqref{101} and \eqref{102}.}

{Using \eqref{1022} with \eqref{1011} leads to
 \begin{equation}\label{1020}
 c_{k_{j+1}^{\star}}(\mu_j^{\star};j+1) < \rho_{k_{j+1}^{\star}} - \rho_{k_{j}^{\star}}
 \end{equation}
from which recalling that
  \begin{equation}
c_{k_{j+1}^{\star}}(\mu_{j+1}^{\star};j+1) = \rho_{k_{j+1}^{\star}} - \rho_{k_{j}^{\star}}
 \end{equation}
we obtain
  \begin{equation}
c_{k_{j+1}^{\star}}(\mu_j^{\star};j+1) < c_{k_{j+1}^{\star}}(\mu_{j+1}^{\star};j+1).
 \end{equation}
Since the functions $c_{n}(\varsigma;j)$ are non-increasing, from the above inequality we eventually obtain $\mu_{j} ^{\star}> \mu_{j+1}^{\star}$
from which using \eqref{sn_inc_2} we have that $\sigma_{k_{j}^{\star}}^{\star}-\sigma_{k_{j}^{\star}+1}^{\star}>0$. Accordingly, from \eqref{sn_inc_1} it follows that $\sigma_n^{\star}-\sigma_{n+1}^{\star} \ge 0$ as required by \eqref{KKT3eq}.}


{We proceed showing that
  \begin{align}\label{0010}
\sum\limits_{i=1}^{n}x^\star_i-\rho_n \le 0 \quad n=1,\ldots,N.
\end{align}
For this purpose, observe that for {$k_{j-1}^{\star} <n \le k_{j}^{\star}$} we have that 
\begin{align}\label{} \nonumber
 \sum\limits_{i=1}^{n}x_i^{\star}& =\rho_{k_{j-1}^{\star}} +\sum\limits_{i= k^\star_{j-1} +1}^{n}\xi_i(\sigma^\star_{i}) = \\ \nonumber& =\rho_{k_{j-1}^{\star}} +\sum\limits_{i= k^\star_{j-1} +1}^{n}\xi_i(\mu^\star_{j}) =\rho_{k_{j-1}^{\star}} +c_n(\mu^\star_{j};j)
\end{align}
from which using \eqref{1000} it easily follows that the inequality in \eqref{0010} is always satisfied.}


{We are now left with proving that
\begin{equation}\label{last5e1}
 (\sigma_n^{\star}-\sigma_{n+1}^{\star})\Big(\sum\limits_{i=1}^{n}x_i^{\star}-\rho_n\Big)=0 \quad\mathrm{for} \;\; k_{j-1}^{\star}< n \le k_j^{\star}.
\end{equation}
For this purpose, we start observing that \eqref{last5e1} is trivially satisfied for $k_{j-1}^{\star}< n < k_j^{\star}$ due to \eqref{sn_inc_1}.
On the other hand, setting $n=k_j^{\star}$ into \eqref{last5e1} yields
\begin{equation}\label{last5e2}
 (\sigma_{k_j}^{\star}-\sigma_{k_j+1}^{\star})\Big(\sum\limits_{i=1}^{k_j^{\star}}x_i^{\star}-\rho_{k_j^{\star}}\Big)=0
 \end{equation}
which holds true if and only if $\sum\nolimits_{i=1}^{k_j^{\star}}x_i^{\star}-\rho_{k_j^{\star}}=0$ since $\sigma_{k_j}^{\star} - \sigma_{k_j+1}^{\star}>0$. To this end, we observe that
\begin{align}\label{} \nonumber
 \sum\limits_{n=1}^{k_j^{\star}}x_n^{\star}& =\sum\limits_{\ell=0}^{j-1}\sum\limits_{i=k_\ell^{\star}+1}^{k_{\ell+1}^{\star}}x_i^{\star} =\sum\limits_{\ell=0}^{j-1}\sum\limits_{i=k_\ell^{\star}+1}^{k_{\ell+1}^{\star}}\xi_i(\mu_{\ell+1}^{\star}) \\
 & =\rho_{k_{1}^{\star}}+\sum\limits_{\ell=1}^{j-1}\left(\rho_{k_{\ell+1}^{\star}}-\rho_{k_{\ell}^{\star}}\right)=\rho_{k_{j}^{\star}}
\end{align}
which shows that also \eqref{last5e2} is satisfied.

Collecting all the above results together, it follows that the quantities $\{\sigma_{n}^{\star}\}$ computed by means of Algorithm 1 satisfy the KKT conditions.}

\section*{Appendix D \\ Existence and uniqueness of $\varsigma^{\star}_n$}

{The purpose of this Appendix is to show that the quantities $\varsigma^{\star}_{n}$ required by \eqref{101.10} and \eqref{102.10} for the computation of $\mu^\star$ and $k^\star$, respectively, are always well-defined. This amounts to proving that at each iteration either \eqref{100.10} or \eqref{100.101} provide a \emph{unique} $\varsigma^{\star}_{n} \ge 0$ for any $n \in \mathcal{N}_{j}$.}

{As done in Appendix C, we refer to the equivalent form illustrated in Algorithm 2 and start considering the first iteration for which $j=1$, $\gamma_n=\rho_n$ and $k^\star_0 = 0$. Under the assumption that the problem ($\mathcal{P}$) is feasible (see Proposition 1 in Section II.A), and recalling Remark 4 ( see Section III.A), the following two cases are of interest:
{\bf{a}}) $\sum\nolimits_{i=1}^n l_i < \rho_n < \sum\nolimits_{i=1}^n u_i$;
{\bf{b}})  $\rho_n \ge \sum\nolimits_{i=1}^n u_i$.
%
In particular, case {\bf{a}}) can be easily handled observing that $c_n(\varsigma;1)$ is strictly decreasing in the interval $(\omega_n,\Omega_n)$ with
\begin{align}
\omega_n=\min\{h_i(u_i),i=1,\ldots,n\}\\
\Omega_n=\max\{h_i(l_i),i=1,\ldots,n\}
\end{align}
whereas $c_n(\varsigma;1)=\sum\nolimits_{i=1}^n u_i$ for $\varsigma \in [0,\omega_n]$, and $c_n(\varsigma;1)=\sum\nolimits_{i=1}^n l_i$ for $\varsigma \in [\Omega_n,+\infty)$. Accordingly, if case {\bf{a}}) holds true, then the solution of $c_n(\varsigma;1)=\gamma_n$ exists and is unique, it belongs to the interval $(\omega_n,\Omega_n)$, and coincides with the quantity $\varsigma_{n,1}^\star$ as computed through \eqref{100}. On the other hand, if case {\bf{b}}) holds true, then $\varsigma_{n,1}^\star=0$ as given by \eqref{101.0120}. In both cases, Algorithm 2 produces a single value of $\varsigma_{n,1}^\star \ge 0$ for any $n$.

Consider now the $(j+1)$th step of Algorithm 2. Assume that at the $j$th step the value of $\varsigma_{n,j}^{\star}$ is well-defined (in the sense specified above) for any $n > k^\star_{j-1}$. This means that
$\varsigma_{n,j}^{\star}=0$ if
\begin{align}
\gamma_n = \rho_n-\rho_{k^\star_{j-1}} \ge \sum\limits_{i=k^\star_{j-1}+1}^n u_i.
\end{align}
On the other hand, $\varsigma_{n,j}^{\star} >0$ is the unique solution of $c_n(\varsigma;j)=\rho_n-\rho_{k^\star_{j-1}}$, when
\begin{align}
\sum\limits_{i=k^\star_{j-1}+1}^n l_i < \rho_n-\rho_{k^\star_{j-1}} < \sum\limits_{i=k^\star_{j-1}+1}^n u_i.
\end{align}
In the sequel, it is shown that if the above assumptions hold true then the value of $\varsigma_{n,j+1}^{\star}$ for $n > k^\star_{j}$ is also well-defined at the $(j+1)$th step. This amounts to saying that
\begin{align} \label{139}
\sum\limits_{i=k^\star_{j}+1}^n l_i < \rho_n-\rho_{k^\star_{j}}
\end{align}
for $n > k^\star_{j}$.
By contradiction, assume that there exists an index $\bar{k} > k_{j}^\star$ such that 
\begin{align}\label{140}
\sum\limits_{i=k^\star_{j}+1}^{\bar{k}} l_i \ge \rho_{\bar{k}}-\rho_{k^\star_{j}}.
\end{align}
This would mean that $\forall \varsigma >0$
\begin{align}
c_{\bar{k}}(\varsigma; j+1) \ge \sum\limits_{i=k^\star_{j}+1}^{\bar{k}} l_i \ge \rho_{\bar{k}}-\rho_{k^\star_{j}}
\end{align}
from which, recalling that $ c_{\bar{k}}(\varsigma;j+1)=c_{\bar{k}}(\varsigma;j)-c_{k^\star_{j}}(\varsigma;j)$ and setting $\varsigma={\mu}^\star_{j}$,
one would get
 \begin{align}
c_{\bar{k}}({\mu}^\star_{j};j) \ge c_{k^\star_{j}}({\mu}^\star_{j};j) +\rho_{\bar{k}}-\rho_{k^\star_{j}} = \rho_{\bar{k}}-\rho_{k^\star_{j-1}}.
 \end{align}
This would contradict the fact that $c_{n}({\mu}^\star_{j};j) < \rho_{n}-\rho_{k^\star_{j-1}}$ for any $n > {k^\star_{j}}$, as already shown in \eqref{1000}. Accordingly, we must conclude that it cannot exist an index $\bar{k} > k_{j}^\star$ for which \eqref{140} is satisfied, and hence that \eqref{139} holds $\forall n > k_{j}^\star$. In turn, this amounts to saying that if the values of $\varsigma_{n,j}^{\star}$ for $n>k^\star_{j-1}$ are well-defined at the $j$th step, allowing the computation of $\mu^\star_j$ and $k^\star_{j}$, then the values of $\varsigma_{n,j+1}^{\star}$ for $n>k^\star_{j}$, computed at the $(j+1)$th step, are well-defined as well, allowing the computation of $\mu^\star_{j+1}$ and $k^\star_{j+1}$. Since the values of $\varsigma_{n,j}^{\star}$ are well-defined at the first iteration, then they are always well-defined. This concludes the proof.

%
%
%
%
%
%
%

\bibliographystyle{IEEEtran}
\bibliography{IEEEabrv,bibnew}

\begin{thebibliography}{10}
\providecommand{\url}[1]{#1}
\csname url@samestyle\endcsname
\providecommand{\newblock}{\relax}
\providecommand{\bibinfo}[2]{#2}
\providecommand{\BIBentrySTDinterwordspacing}{\spaceskip=0pt\relax}
\providecommand{\BIBentryALTinterwordstretchfactor}{4}
\providecommand{\BIBentryALTinterwordspacing}{\spaceskip=\fontdimen2\font plus
\BIBentryALTinterwordstretchfactor\fontdimen3\font minus
  \fontdimen4\font\relax}
\providecommand{\BIBforeignlanguage}[2]{{%
\expandafter\ifx\csname l@#1\endcsname\relax
\typeout{** WARNING: IEEEtran.bst: No hyphenation pattern has been}%
\typeout{** loaded for the language `#1'. Using the pattern for}%
\typeout{** the default language instead.}%
\else
\language=\csname l@#1\endcsname
\fi
#2}}
\providecommand{\BIBdecl}{\relax}
\BIBdecl

\bibitem{PalomarQoS2004}
D.~Palomar, M.-A. Lagunas, and J.~Cioffi, ``Optimum linear joint
  transmit-receive processing for {MIMO} channels with {QoS} constraints,''
  \emph{IEEE Trans. Signal Process.}, vol.~52, no.~5, pp. 1179 -- 1197, May
  2004.

\bibitem{PalomarAug2005}
D.~Palomar, M.~Bengtsson, and B.~Ottersten, ``Minimum ber linear transceivers
  for {MIMO} channels via primal decomposition,'' \emph{IEEE Trans. Signal
  Process.}, vol.~53, no.~8, pp. 2866 -- 2882, Aug 2005.

\bibitem{Palomar2005aa}
D.~Palomar, ``Convex primal decomposition for multicarrier linear {MIMO}
  transceivers,'' \emph{IEEE Trans. Signal Process.}, vol.~53, no.~12, pp. 4661
  -- 4674, Dec 2005.

\bibitem{Jiang2006}
Y.~Jiang, W.~W. Hager, and J.~Li, ``Tunable channel decomposition for {MIMO}
  communications using channel state information,'' \emph{IEEE Trans. Signal
  Process.}, vol.~54, no.~11, pp. 4405 -- 4418, Nov 2006.

\bibitem{Bergman2009}
S.~Bergman, D.~Palomar, and B.~Ottersten, ``Joint bit allocation and precoding
  for {MIMO} systems with decision feedback detection,'' \emph{IEEE Trans.
  Signal Process.}, vol.~57, no.~11, pp. 4509 -- 4521, Nov 2009.

\bibitem{Fu2011}
Y.~Fu, L.~Yang, W.-P. Zhu, and C.~Liu, ``Optimum linear design of two-hop
  {MIMO} relay networks with {QoS} requirements,'' \emph{IEEE Trans. Signal
  Process.}, vol.~59, no.~5, pp. 2257 -- 2269, May 2011.

\bibitem{Shaolei2010}
S.~Ren and M.~van~der Schaar, ``Distributed power allocation in multi-user
  multi-channel cellular relay networks,'' \emph{IEEE Trans. Wireless Commun.},
  vol.~9, no.~6, pp. 1952 -- 1964, June 2010.

\bibitem{Sanguinetti2012}
L.~Sanguinetti and A.~D'Amico, ``Power allocation in two-hop
  amplify-and-forward {MIMO} relay systems with {QoS} requirements,''
  \emph{IEEE Trans. Signal Process.}, vol.~60, no.~5, pp. 2494 -- 2507, May
  2012.

\bibitem{An2013}
A.~Liu, V.~Lau, and Y.~Liu, ``Duality and optimization for generalized
  multi-hop {MIMO} amplify-and-forward relay networks with linear
  constraints,'' \emph{IEEE Trans. Signal Process.}, vol.~61, no.~9, pp. 2356
  -- 2365, May 2013.

\bibitem{Sanguinetti2013}
L.~Sanguinetti, A.~D'Amico, and Y.~Rong, ``On the design of amplify-and-forward
  {MIMO-OFDM} relay systems with {QoS} requirements specified as {Schur}-convex
  functions of the {MSE}s,'' \emph{IEEE Trans. Veh. Technol.}, vol.~62, no.~4,
  pp. 1871 -- 1877, May 2013.

\bibitem{Palomar2005}
D.~Palomar and J.~Fonollosa, ``Practical algorithms for a family of
  waterfilling solutions,'' \emph{IEEE Trans. Signal Process.}, vol.~53, no.~2,
  pp. 686 -- 695, Feb. 2005.

\bibitem{Gao2008}
F.~Gao, T.~Cui, and A.~Nallanathan, ``Optimal training design for channel
  estimation in decode-and-forward relay networks with individual and total
  power constraints,'' \emph{IEEE Trans. Signal Process.}, vol.~56, no.~12, pp.
  5937 -- 5949, Dec 2008.

\bibitem{Farhadi09}
G.~Farhadi and N.~Beaulieu, ``{Power-optimized amplify-and-forward multi-hop
  relaying systems},'' \emph{IEEE Trans. Wireless Commun.}, vol.~8, no.~9, pp.
  4634 -- 4643, 2009.

\bibitem{Padakandla2009}
A.~Padakandla and R.~Sundaresan, ``Power minimization for {CDMA} under colored
  noise,'' \emph{IEEE Trans. Commun.}, vol.~57, no.~10, pp. 3103 -- 3112, Oct.
  2009.

\bibitem{Pham2010}
T.~Pham, H.~Nguyen, and H.~Tuan, ``Power allocation in {MMSE} relaying over
  frequency-selective rayleigh fading channels,'' \emph{IEEE Trans. Commun.},
  vol.~58, no.~11, pp. 3330 -- 3343, Nov. 2010.

\bibitem{Ozel2011}
O.~Ozel, K.~Tutuncuoglu, J.~Yang, S.~Ulukus, and A.~Yener, ``Transmission with
  energy harvesting nodes in fading wireless channels: Optimal policies,''
  \emph{IEEE J. Sel. Areas Commun.}, vol.~29, no.~8, pp. 1732 -- 1743,
  September 2011.

\bibitem{Gregori2013}
M.~Gregori and M.~Payaro, ``On the precoder design of a wireless energy
  harvesting node in linear vector gaussian channels with arbitrary input
  distribution,'' \emph{IEEE Trans. Commun.}, vol.~61, no.~5, pp. 1868 -- 1879,
  May 2013.

\bibitem{Padakandla2007}
A.~Padakandla and R.~Sundaresan, ``Separable convex optimization problems with
  linear ascending constraints, year={2007}, month={July}, volume={20},
  number={3}, pages={1185 -- 1204}, issn={0090-6778},,'' \emph{SIAM J. Opt.}

\bibitem{Wang2012}
Z.~Wang, ``On solving convex optimization problems with linear ascending
  constraints,'' \emph{CoRR}, vol. abs/1212.4701v3, 2014.

\bibitem{NCC2014}
P.~Akhil, R.~Singh, and R.~Sundaresan, ``A polymatroid approach to separable
  convex optimization with linear ascending constraints,'' in
  \emph{Communications (NCC), 2014 Twentieth National Conference on}, Feb 2014,
  pp. 1--5.

\bibitem{BoydBook}
S.~Boyd and L.~Vandenberghe, \emph{Convex Optimization}.\hskip 1em plus 0.5em
  minus 0.4em\relax New York, NY, USA: Cambridge University Press, 2004.

\bibitem{Palomar2003}
D.~Palomar, J.~Cioffi, and M.-A. Lagunas, ``Joint {Tx-Rx} beamforming design
  for multicarrier {MIMO} channels: a unified framework for convex
  optimization,'' \emph{IEEE Trans. Signal Process.}, vol.~51, no.~9, pp. 2381
  -- 2401, Sept 2003.

\end{thebibliography}

\end{document}